\documentclass{article}

\usepackage{microtype}
\usepackage{graphicx}
\usepackage{subcaption}
\usepackage{booktabs} 
\usepackage{balance}
\usepackage{hyperref}

\usepackage[preprint]{icml2026}

\usepackage{amsmath}
\usepackage{amssymb}
\usepackage{mathtools}
\usepackage{amsthm}

\usepackage[capitalize,noabbrev]{cleveref}

\usepackage{xspace}

\DeclareMathOperator*{\E}{\mathbb{E}}
\newcommand{\eps}{\epsilon}
\newcommand{\Paren}[1]{\left(#1\right)}

\newcommand{\CountDist}{\textsf{CountDistinct}\xspace}
\newcommand{\AHH}{\textsf{AvoidHeavyHitters}\xspace}
\newcommand{\rounding}{\textsf{Rounding}\xspace}
\newcommand{\MaxSel}{\textsf{MaxSelect}\xspace}

\allowdisplaybreaks

\theoremstyle{plain}
\newtheorem{theorem}{Theorem}[section]

\newtheorem{lemma}[theorem]{Lemma}
\newtheorem{corollary}[theorem]{Corollary}
\theoremstyle{definition}
\newtheorem{definition}[theorem]{Definition}

\theoremstyle{remark}

\usepackage[textsize=tiny]{todonotes}

\icmltitlerunning{Keeping a Secret Requires a Good Memory: Space Lower-Bounds for Private Algorithms}

\begin{document}

\twocolumn[
  \icmltitle{Keeping a Secret Requires a Good Memory: \\Space Lower-Bounds for Private Algorithms}



  \icmlsetsymbol{equal}{*}

  \begin{icmlauthorlist}
    \icmlauthor{Alessandro Epasto}{googleresearchusa}
    \icmlauthor{Xin Lyu}{ucberkeley}
    \icmlauthor{Pasin Manurangsi}{googleresearchthailand}
  \end{icmlauthorlist}

  \icmlaffiliation{googleresearchusa}{Google Research, USA}
  \icmlaffiliation{googleresearchthailand}{Google Research, Thailand}
  \icmlaffiliation{ucberkeley}{UC Berkeley, USA}

  \icmlcorrespondingauthor{Xin Lyu}{xinlyu@berkeley.edu}

  \icmlkeywords{Differential Privacy, Streaming Algorithms}

  \vskip 0.3in
]



\printAffiliationsAndNotice{}  

\begin{abstract}
We study the computational cost of differential privacy in terms of memory efficiency. While the trade-off between accuracy and differential privacy is well-understood, the inherent cost of privacy regarding memory use remains largely unexplored. This paper establishes for the first time an unconditional space lower bound for user-level differential privacy by introducing a novel proof technique based on a multi-player communication game. 

Central to our approach, this game formally links the hardness of low-memory private algorithms to the necessity of ``contribution capping''---tracking and limiting the users who disproportionately impact the dataset. We demonstrate that winning this communication game requires transmitting information proportional to the number of over-active users, which translates directly to memory lower bounds. 

We apply this framework, as an example, to the fundamental problem of estimating the number of distinct elements in a stream and we prove that any private algorithm requires almost $\widetilde{\Omega}(T^{1/3})$ space to achieve certain error rates in a promise variant of the problem. This resolves an open problem in the literature (by Jain et al.~\cite{jain2023counting} and Cummings et al.~\cite{CummingsEMMOZ25}) and establishes the first exponential separation between the space complexity of private algorithms and their non-private $\widetilde{O}(1)$ counterparts for a natural statistical estimation task. Furthermore, we show that this communication-theoretic technique generalizes to broad classes of problems, yielding lower bounds for private medians, quantiles, and max-select.
\end{abstract}

\section{Introduction}
As artificial intelligence becomes central to daily life, technology companies must process ever-larger amounts of {\it private} data. To ensure this data is safeguarded, a major effort in machine learning has focused on designing privacy-preserving algorithms that are both {\it accurate enough} to be useful and {\it efficient enough} to be deployed in real-world systems. In this context, the gold standard is to design algorithms with strong theoretical guarantees in terms of {\it user-level} differential privacy (henceforth DP). Celebrated results in this area include efficient and practical algorithms for a multitude of fundamental machine learning and data analysis problems, including training deep learning models privately~\cite{abadi2016deep}, private synthetic data generation~\cite{jordon2018pate}, private statistical estimation~\cite{liu2022differential}, private combinatorial optimization~\cite{gupta2010differentially}, and many others.

Despite these positive results, researchers frequently encounter problems that elude a private solution with the desired efficiency. For instance, consider the fundamental problem of privately estimating the number of distinct elements in a stream, which serves as a running example in this paper. Prior work established private algorithms using polynomial space in the input stream that are provably nearly-optimal in  error~\cite{CummingsEMMOZ25,jain2023counting}, but left as an open problem whether private algorithms with poly-logarithmic space (and equal accuracy) are possible, akin to the non-private setting.

\textbf{Lower bounds on private algorithms.} In such cases, it is desirable to determine, by establishing a provable impossibility result, whether the elusive efficiency goal is actually mathematically incompatible with the privacy requirement. Fortunately, regarding {\it accuracy}, multiple results have elucidated the inherent tradeoff with privacy, providing general lower bound techniques (e.g., one-way marginal queries~\citet{BunUV18,DworkSSUV15}) that have been used to bound the accuracy of private algorithms for many problems. As for the {\it efficiency}, the two most common measures in algorithmic design are: \emph{running time} and \emph{memory}. While there have been a line of work demonstrating that some tasks require more running time with privacy than without~\cite{DworkNRRV09,UllmanV20,blocki2024differential}, much less is understood regarding the privacy versus memory tradeoffs.

Specifically regarding memory tradeoffs with privacy---the focus of this work---to the best of our knowledge, the only existing result is the seminal work of~\citet{DinurSWZ23}, which provides the first space bound for a private algorithm. This work shows that, under certain cryptographic assumptions, a specifically crafted streaming problem involving encryption requires large space if solved privately (whereas it can be solved non-privately with exponentially less space). While this proves that privacy and low space can be incompatible, the result has a few limitations that our work addresses. First, this bound relies on cryptographic assumptions, whereas it would be ideal to prove unconditional information-theoretic bounds that are inherent to the problem itself. Second, and more importantly, the proof technique relies on artificially constructing a hard problem and does not generalize to standard problems encountered in the real world (e.g., standard statistical estimation problems). Finally, it would be ideal for the bound to convey an intuitive explanation for {\it why} certain natural private algorithmic problems require large space.

\textbf{Our contribution.} Our work addresses these limitations and resolves the open problem posed by prior work~\cite{jain2023counting,CummingsEMMOZ25}. We introduce a novel proof technique for establishing unconditional space bounds for private algorithms based on designing a novel communication game. We demonstrate that this technique applies to a variety of widely studied natural statistical estimation problems, including privately counting the number of distinct elements in a stream, as well as computing medians, quantiles, and solving the max-select problem (see Section~\ref{sec:overview} for formal definitions). 

Specifically, for the distinct elements problem in the turnstile model\footnote{In the turnstile model \cite{Muthukrishnan05}, a user can arrive or leave the set at each time step. See \Cref{sec:overview} for a formalization.}, we show that the prior work of~\citet{CummingsEMMOZ25} is essentially space-optimal in certain regimes. In a promise version of the streaming distinct count problem, where each user's contribution to the stream is bounded (see \Cref{sec:overview-occurrency} for details),  \citet{CummingsEMMOZ25} show how to design a private algorithm with error $o(T^{1/3})$ with space complexity $\widetilde{O}(T^{1/3})$ \footnote{Throughout the paper we use the $\tilde{O}$ notation to neglect poly-logarithmic factors in $T$.}. We show that, in order to take advantage of the promise, the space usage of \cite{CummingsEMMOZ25} is nearly optimal: any algorithm with comparable accuracy must use a space of $T^{1/3-o(1)}$ bits. This establishes, for the first time, an exponential separation in memory requirements due to privacy for a natural computational problem. We believe our communication game technique could be used to establish bounds for a broader range of problems.

\textbf{Intuition: ``capping'' contributions requires space.} From a practical perspective, our proof technique links the hardness of low-memory private algorithms to a widely observed phenomenon: providing user-level privacy often requires capping the contributions of users to the algorithms. A well-known technique in the design of many privacy-preserving algorithms is, in fact, limiting the number of updates any user can provide to a dataset (this is often referred to as contribution bounding or capping, e.g.,~\citet{amin2019bounding}). Doing so is often {\it sufficient} to achieve strong privacy protections (after adding appropriately scaled noise), as restricting outlier users from contributing disproportionately to the solution protects their identities from being compromised. Intuitively,  keeping track of $k$ ``over-active'' users that have exhausted their contribution budget requires at least $\tilde{O}(k)$ bits of memory. The crux of our work is showing that, for certain problems, this widely-used (but memory intensive) strategy is not just useful, but essentially necessary. This result is, perhaps, counter-intuitive. Indeed, one could imagine many low-space alternative approaches, for instance based on aggressive down-sampling, where user contributions are effectively limited while using much less space. 

To achieve this lower bound, we design a multi-player communication game where each player possesses contributions from a subset of (potentially repeated) users. Each player needs to solve the problem for their subset of data (privately) and then can pass useful (non-private) information to the next player regarding which users (at most $k$ out of $n$) have exceeded their contribution limits. We show that to win such a game, the players need to communicate $\tilde{O}(k)$ bits of information, and we relate this requirement to impossibility results for solving the problem privately.

\textbf{Organization.} Before proving our main results (Theorem~\ref{thm: main}), we first briefly review the relevant work. Next, in Section~\ref{sec:overview}, we formally define user-level DP in the streaming setting and present our hardness construction. Finally, in Section~\ref{sec:proof}, we formally prove the main result.

\subsection{Related Work}

Our work spans several areas, including lower bound techniques in DP and data streaming with privacy.

\textbf{Lower bounds for private algorithms. }
Significant work has been dedicated to understanding the accuracy-privacy tradeoffs in DP. This includes the reconstruction attack \cite{dinur2003revealing} and subsequent discrepancy-based lower bounds \cite{muthukrishnan2012optimal, nikolov2013geometry}; the packing argument \cite{hardt2010geometry}; the fingerprinting method \cite{BunUV18,DworkSSUV15, steinke2017tight, lyu2025fingerprinting}; and others \cite{bun2015differentially, acharya2021differentially, cohen2024lower}. The reader is also referred to the comprehensive textbook \cite{vadhan2017complexity} for further discussion on this topic.

To the best of our knowledge, the only known space lower bound separating DP versus non-DP streaming algorithms is that of Dinur et al.~\cite{DinurSWZ23}. In that work, the authors design a novel cryptography-related streaming problem (where encryption keys are received in a certain order), which is then used to prove (under cryptographic assumptions) that a space-bounded algorithm cannot solve the problem privately. In contrast, the goal of our paper is to design lower bounds that are \emph{information-theoretic} (i.e., do not rely on any assumption apart from accuracy) and that can be applied to natural problems that have been well-studied in the literature.


\textbf{Data streaming algorithms. }
The literature on
non-private streaming algorithms is vast and we refer interested readers to surveys on the topic, e.g.,~\citet{Muthukrishnan05}, for more detail. Nevertheless, we point out that the problems we study--counting distinct elements and quantiles--are among the most well studied problems in the area. (See e.g.,~\citet{ kane2010optimal,CormodeKLTV23} and references therein.) 

Particularly relevant to our work is the area of DP applied to streaming algorithms. Data streaming applications in which one needs to track statistics over a large number of potentially private events occur in many real-world scenarios, including digital advertising, network monitoring, and database systems. This has motivated the growing area of continual release DP algorithms in streaming~\cite{dwork2010differential}. Beyond the already cited work of \citet{jain2023counting,CummingsEMMOZ25} on count distinct elements in the turnstile model,~\cite{BolotFMNT13,GhaziKNM23} has studied the problem (and its variants) on the more stringent \emph{cash register} model (where users can only join but not leave the set). Recent works \citet{andersson2026improved, aamand2026skirting} have shown improved the error rate for private count distinct elements but did not focus on improving the space requirements of~\citet{CummingsEMMOZ25}. There has also been significant work on related problems such as sum and histogram estimation~\cite{henzinger2024unifying,henzinger2023differentially}, moment estimation~\cite{epasto2023differentially}, and graph statistics~\cite{jain2024time}.

\section{The Problem Setting}\label{sec:overview}

Before presenting our main results, we introduce the concept of user-level differential privacy in the context of continual release streams, as well as the problems for which we provide space lower bounds in our work.

\textbf{General setting.}
All problems we study share the following continual release setting. 
Consider a set of $N$ users, denoted by $u_1, \dots, u_N$. The users  contribute to a stream $x$ of $T$ steps $x_1, \dots, x_T$. Each step $t$ of the stream, there is either an empty update ($x_t = \bot$) or the step consists of one user $u_{i(t)}$ submitting a problem specific update $j(t)$ (denoted as $x_t = (u_{i(t)}, j(t))$). 
After receiving the update $x_t$ the algorithm responds with an output $a_t$ which seeks to approximate the true answer $a^*_t$ for step $t$. The problems (defined below) differ on the specific input and true answer.

\textbf{Accuracy requirement.}
In all problems we seek the answers of the problem $a_t$ to be a 
$(\tau, \eta)$-approximation to the true answer $a^*_t$, for all steps $t$, $a_t \in (1 \pm \tau) a^*_t \pm \eta.$, 
To simplify the notation we let $\tau=.1$ and focus only on the additive error $\eta$.

\textbf{Privacy constraint.} For all problems we seek to design user-level differentially private algorithms. 
In user-level DP, a pair of streams $x, x'$ is considered neighbors if one can be derived from the other by replacing {\it all} updates of a single user $u$ with $\bot$; we then enforce that the output distributions of the algorithm across all $T$ steps on $x$ and $x'$ are $(\eps, \delta)$-indistinguishable.
Formally, denote by $A(x)$ (resp.~$A(x')$) the length-$T$ vector of the responses from the algorithm on input stream $x$ (reps.~$x'$). For any event $S$ on the outcome vector, $(\eps, \delta)$-user-level DP requires that:
\begin{align*}
    \Pr[A(x)\in S] \le \Pr[A(x')\in S] \cdot e^\epsilon + \delta.
\end{align*}

\textbf{The \textsf{CountDistinct}  problem.}
We can now define the first problem formally. For the \textsf{CountDistinct}  problem the updates of the stream $x_t$ are either empty ($x_t = \bot$) or consists of one user $u_{i(t)}$ submitting either a ``$+1$'' or a ``$-1$'' (denoted as $x_t = (u_{i(t)}, \pm 1)$). 
At any time step $t \in [T]$, we can define the frequency vector $f^t \in \mathbb{R}^{N}$, where $f^t_i$ denotes the total sum of the updates from user $u_i$ up to time $t$ (users with no updates are assumed to have a $0$ sum). Our lower bound holds for an even more restrictive version of the problem where the frequency $f^t_i \in \{0,1\}$ for all users $u_i$ and all times $t$, making the lower bound stronger. 

The \textsf{CountDistinct} problem requires the algorithm to output, after each step $t$, a value $a_t$ that  approximates $a^*_t =\|f^t\|_0$ (i.e., the number of users with a non-zero sum at time $t$). 

Notice that like in prior work~\cite{CummingsEMMOZ25}, this formulation allows both additive and multiplicative errors in the estimate. The additive error is necessary given the strong information-theoretic lower bounds against differentially private algorithms (see, e.g., \cite{jain2023price, jain2023counting}). The multiplicative error is known to be required to design space-efficient streaming algorithms for this problem (such as HyperLogLog, MinHash, etc.) \cite{alon1996space, kane2010optimal}, even in the non-private setting.
Observe how the formulation naturally models, for instance, counting how many users are currently active at any time (where a $+1$ update indicates the user activates, and a $-1$ update indicates the user deactivates, at a certain time). 


\textbf{The \textsf{MaxSelect} problem.} 
In the \textsf{MaxSelect} problem \cite{jain2023price}, there is a publicly known set of $d$ features. The input of every user is a binary vector $v_u\in \{0,1\}^d$, which is initialized to the all-zero vector and users are allowed to update their input by flipping one bit at a time. More formally at step $t$, the algorithm receives an update that is either empty ($x_t = \bot$) or consists of one user $u_{i(t)}$ submitting the flipping (from $0$ to $1$ or vice versa) of a position $j(t)$ of the their vector (denoted as $x_t = (u_{i(t)}, j(t))$). 

At every time step $t$, the streaming algorithm needs to track approximately the most ``popular'' feature among the $d$ choices.  More precisely, let $\sum_{u} v^t_{u, j}$ be the value of item $j$ at time $t$.
Denote by $j^*(t)$ the feature with true maximum value at time $t$. With slight abuse of notation we let $a^*_t= \sum_{u} v^t_{u, j^*(t)}$ be the true answer. The algorithm seeks to output an item $a_t$ such that $\sum_{u} v^t_{u, a_t}$ 
is  a $(\tau,\eta)$-approximation to $a^*_t$.

An example instantiation of \MaxSel may be the following: a book lovers platform hosts $d$ books, every time a user starts (or completes) reading a book corresponds to a flip for that user and book. The platform wants to track the ``most popular'' book being read 
at the time.

\textbf{The \textsf{Quantile} problem.}
There is a publicly known and ordered universe $[U] := \{1, \dots, U\}$. The input of every user is an item $v_u\in [U]$ and the inputs may change over the stream. Specifically, at time step $t$, the algorithm receives an update that is either empty ($x_t = \perp$) or consists of one user $u_{i(t)}$ changing their item to $v(t)$ (i.e. $x_t = (u_{i(t)}, v(t))$). 

Furthermore, at every time step, the streaming algorithm needs to answer a quantile query: that is, a rank parameter $0\le k\le N$ is specified and the algorithm needs to respond with an item $a_t\in U$ such that the rank of $a_t$ relative to the collection of users' items is approximately $k$. Namely, denote $\hat{k} = |\{i\in [N]: v_{u_i} \le a_t\}$ as the rank\footnote{If there is one or more $v_{u_i}$'s that are of the same as $a_t$, the rank of $a_t$ among those $v_{u_i}$'s can be \emph{arbitrarily} chosen for the purpose of evaluating the correctness.} of $a_t$ relative to $\{v_{u_1},\dots, v_{u_N}\}$. We require that $\hat{k}$ to be a $(\tau, \eta)$-approximation to $k$. We also note that in this problem formulation, the query parameter $k$ is considered a piece of public information, and the algorithm does not need to protect its privacy (as otherwise the problem is impossible to solve).


We note that both problems admit differentially-private algorithms with $\widetilde{O}(T^{1/3})$ additive error (albeit with large, polynomial space), as well as a $\mathrm{poly}(\log(n), d, \log(U))$ space non-private streaming algorithms with small constant multiplicative error (see for example \cite{CormodeKLTV23, jain2023counting,CummingsEMMOZ25}).

\subsection{\textsf{CountDistinct}: the occurrency bounding technique} \label{sec:overview-occurrency}

We henceforth focus only on the \textsf{CountDistinct} problem which serves as the main example for our lower bound technique. To gain insights on the hardness of the \textsf{CountDistinct} problem we will first review the approaches used in prior work to achieve polynomial space algorithms for this problem. The intuition gained from this problem will serve to understand the lower bound in all other cases. 

For this problem without any restriction on user contributions to the stream, with user-level differential privacy, it is known~\cite{jain2023counting} that the minimax (additive) error rate of the problem scales as $\widetilde{\Theta}(T^{1/3})$ (for an arbitrary small constant multiplicative approximation). The prior work of~\cite{CummingsEMMOZ25} achieves this error rate, with high probability, on arbitrary streams, with $\widetilde{\Theta}(T^{1/3})$ space. 
%
%
Furthermore, \cite{CummingsEMMOZ25} devises an improved algorithm for a promise variant of the problem that imposes a cap on the contributions of the users to the stream. This is formalized through the {\it occurrency} concept, which we will discuss next. (See also~\citet{jain2023counting} for the related concept of flippancy.)

\textbf{Occurrency of the stream.}
The \emph{occurrency} of a user $u_i$ in a stream is defined as the number of updates involving that user. The stream is \emph{$(w,k)$-occurrency bounded} if the stream contains at most $k$ users with strictly more than $w$ updates. This definition is inspired by the well-known long-tailed user distribution observed in practice, where it is known that a small number of users are significantly more active than the vast majority~\cite{mahanti2013tale,clauset2009power}.

Prior work~\cite{jain2023counting,CummingsEMMOZ25} demonstrated that if the stream is guaranteed to be $(w,0)$-occurrency bounded (i.e., no user has more than $w$ updates), an additive error of $O(w^{1/2}\cdot \mathrm{polylog}(T))$ is achievable (with high probability).

\textbf{Capping the occurrency of a stream.}
Real-world datasets, however, are rarely strictly $(w,0)$-bounded; it is common for a few users (say $k$) to have an occurrency higher than $w$. A natural approach to solving these instances, used in~\cite{CummingsEMMOZ25}, is the following: first identify ``heavy'' users with more than $w$ updates, then ignore all future updates from such users. By effectively capping all users at $w$ updates, we obtain a $(w,0)$-bounded stream. Consequently, in a $(w,k)$-occurrency bounded stream, assuming a capping at $w$ occurrency that removes the $k$ heavy users, we can obtain an additive error of $O(w^{1/2}\cdot \mathrm{polylog}(T)) +k$. 

Note that any arbitrary stream is $(T^{2/3}, T^{1/3})$-occurrency bounded, as at most $k=T^{1/3}$ users can have $w=T^{2/3}$ updates (given that the total number of updates is $T$). Based on this, \citet{CummingsEMMOZ25} use (an approximate version of) the aforementioned capping technique to obtain an algorithm with an additive error of $\tilde{O}(w^{1/2} + k) = \tilde{O}(T^{1/3})$ and also $\tilde{O}(T^{1/3})$ space for arbitrary input streams.

\textbf{Space complexity of occurrency bounding.}
The challenge with this approach is that the set of heavy users is not known to the algorithm a priori, and their IDs need to be identified on the fly. Intuitively, this requires a large amount of memory. An obvious solution is to maintain a counter for each user to track their number of updates. Whenever a user's counter crosses a predetermined threshold, the algorithm marks that user as heavy and ignores them in subsequent analysis. More sophisticated algorithms can use sampling or \emph{sketch}-based solutions to estimate these counts and identify ``heavy hitters'' (e.g., the famous CountSketch~\cite{charikar2002finding}). Nonetheless, all known approaches seem to require $\Omega(k)$ bits of working memory---after all, it takes at least $\Omega(k)$ bits simply to remember the set of identified heavy users. In fact, as shown in~\citet{CummingsEMMOZ25}, identifying all users with more than $w$ updates requires at least $O(T/w)$ space even allowing some errors.

\subsection{Our main result} \label{sec:overview-main result}
While prior work already showed that occurrency bounding is space-inefficient, this is not sufficient to establish an unconditional space lower bound for our problem (and others). Indeed, a better solution to \textsf{CountDistinct} may not involve at all identifying heavy users. 

Our main results show that the aforementioned overhead in memory is actually \emph{necessary}, regardless of whether one wants to explicitly employ the occurrency bounding technique or not.  More formally, we will show that any user-level DP algorithm with additive error smaller than $(kw)^{0.5 - \Omega(1)}$
must use $(kw)^{\Omega(1)}$ bits of memory on $(w, k)$-occurrency bounded \textsf{CountDistinct} streams. This should be contrasted with the information-theoretical upper bound \cite{CummingsEMMOZ25}, where we know $\widetilde{O}(\sqrt{w}+k)$ additive error is achievable using a polynomial-space algorithm. Namely, we exhibited a notable information-computation gap on the space complexity of the \textsf{CountDistinct} problem.

We can now state our main theorem\footnote{For the ease of exposition, we have fixed the privacy parameters to a common and standard choice of $(\varepsilon = 0.5, \delta = \frac{1}{T^2})$ and we do not focus on how our results scale with $\varepsilon, \delta$.}.
\begin{theorem}\label{thm: main} 
    Let $\gamma_w, \gamma_k, \gamma_h \in (0, 1)$ be such that $\frac{1}{2} \gamma_w \le \gamma_k \le \gamma_h < \gamma_w$ and $3\gamma_h < 1$, 
    and $T\ge 1$ be sufficiently large. Let $w = T^{\gamma_w}, k = T^{\gamma_k}, h = T^{\gamma_h}$.
    Suppose $A$ is an $(\varepsilon = 0.5,\delta = \frac{1}{T^2})$-DP algorithm for the streaming \textsf{CountDistinct} problem on $(w, k)$-occurrency bounded streams with the following accuracy guarantee: for any fixed stream $x$, with probability $1-\frac{1}{T^2}$ it holds that the responses of $A(x)$ are $(\tau =\frac{1}{10}, \eta = h)$-approximations to the true answers across the stream. Then, $A$ must use a space of at least $M\ge \widetilde{\Omega}\left( \frac{kw}{h^2} \right) = \widetilde{\Omega}\left( T^{\gamma_w + \gamma_k - 2\gamma_h} \right)$ bits.  
\end{theorem}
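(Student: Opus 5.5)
We argue by contradiction: from a low-space algorithm $A$ we build a protocol for a multi-player communication game, and then show that winning the game needs much more communication than $A$'s memory allows. Split the stream into $m$ consecutive blocks and assign block $\ell$ to player $P_\ell$. The players simulate $A$ in turn, and $P_\ell$ sends the memory state of $A$ at the end of its block to $P_{\ell+1}$. Each message is therefore $M$ bits long, and the total communication is at most $mM$.

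\textbf{Hard distribution.} The inputs are chosen so that solving \CountDist accurately essentially forces the players to identify the heavy users.
\begin{itemize}
\item The user universe is split into fresh light users and a pool of roughly $k$ candidate heavy users.
\item Each candidate heavy user is hidden among many light users: in each block it appears a few times, toggling $0\to1\to0$, so the frequency stays in $\{0,1\}$.
\item Each hidden user accumulates about $w$ updates over the whole stream, so the stream is $(w,k)$-occurrency bounded, with $T \gtrsim kw$.
\item A block also contains "probe" segments in which a batch of about $h$ or more users, some heavy and some light, become simultaneously active.
\end{itemize}
The promise and the additive error $h$ force the output to reflect these activations. Privacy then enters through group privacy and a fingerprinting/packing-style argument. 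If the algorithm treats a user that has already made $\gg w$-ish updates the same as a fresh user, then on a $(w,0)$-bounded sub-instance it must have error around $\sqrt{w}$ per heavy-ish coordinate. To keep the error below $h$, it must "cap" at least $\Omega(kw/h^2)$ user-block incidences. Concretely, we take the $(\sqrt{w})$-type lower bound for unbounded-occurrency DP counting (from \cite{jain2023counting}) and apply it locally to each group of $h^2$-ish updates.

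\textbf{Step 1 (reduction to the game).} Show that accuracy $h$ together with $(0.5,T^{-2})$-DP implies the following. For a constant fraction of probe segments, the output distribution of $A$ must differ noticeably depending on whether the involved users are ones that previously exceeded their budget. Since $A$ is DP with respect to the current block's users only through its state, the message that $P_\ell$ received must carry information identifying which candidates are heavy.

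\textbf{Step 2 (information lower bound).} Set up the game so that the identity of the heavy set (about $k$ users hidden among about $kw/h^2$ possibilities, after rescaling) is uniformly random. Then bound the mutual information between the transcript and the heavy set from below by $\widetilde\Omega(kw/h^2)$. We do this with a direct-sum/chain-rule argument across blocks: each block contributes fresh uncertainty that must be resolved, and each message has $M$ bits. The fact that the $m$ players pass states sequentially, and a single state must support correct behavior on all future blocks, gives a per-message bound rather than a total one. This is the crux, and we would choose $m$ so that the per-message requirement is itself $\widetilde\Omega(kw/h^2)$.

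\textbf{Main obstacle.} The hardest step is Step 1: turning "the output is private and accurate" into "the state must encode heavy users." We would try a hybrid argument. Replace the heavy user's future updates by those of a fresh user. Accuracy demands that the answers change by more than the noise allows, while DP, applied to the user whose past is forgotten, forces them to be close unless the state distinguishes that user. Making this quantitative, with $\delta=T^{-2}$ and the $3\gamma_h<1$ constraint ensuring that the $h^2$-granularity groups fit in $T$, is where the exponent $kw/h^2$ must come out.
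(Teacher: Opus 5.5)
There is a genuine gap. Your architecture matches the paper's: players simulate $A$ block by block and forward its $M$-bit state. But both substantive steps are left open, and the routes you sketch for them would not work.

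\textbf{Step 1.} The hybrid you propose cannot produce a contradiction. Replacing a heavy user's future updates by those of a fresh user leaves every true distinct count unchanged, so accuracy forces no difference between the hybrids. Likewise, importing the $\sqrt{w}$ accuracy lower bound of \cite{jain2023counting} ``locally'' says nothing about memory.

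The missing idea is an aggregate correlation argument.
\begin{itemize}
\item Each phase poses $w$ random one-way-marginal queries (insert, query, revoke) to the $3h+k$ users of $S_i\cup C$.
\item The fingerprinting lemma forces the phase's total score $\sum_{u}\sum_j a(i,j)(b(u,i,j)-p(i,j))$ to be $\Omega(w)$.
\item User-level DP caps each heavy user's score, accumulated over all $Pw$ queries it takes part in, at $O(\sqrt{Pw\log T})$.
\item Because $P\approx T/(wh)\gg h^2/w$ (this is where $3\gamma_h<1$ enters), spreading the score evenly would give each heavy user about $Pw/h\gg\sqrt{Pw}$. So the algorithm must steer correlation away from $C$.
\item A rounding step turns the scores into sets $Y_i$ of size $(\tfrac12+\epsilon_1)(3h+k)$ with $\epsilon_1=\widetilde\Theta(\sqrt{w}/h)$. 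In these sets every heavy user lies in at most $(\tfrac12+\epsilon_2)P$ of them, with $\epsilon_2=O(1/\sqrt{P})$.
\end{itemize}
Your distribution also has each hidden user making only about $w$ updates. That makes the stream essentially $(w,0)$-bounded and removes the phenomenon altogether. The heavy users must take part in every phase (about $2Pw$ updates each), and this long participation is exactly what lets the DP cap of $\sqrt{Pw}$ bite against the $Pw/h$ that accuracy would accumulate.

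\textbf{Step 2.} This step misidentifies where the bound comes from. The heavy set is fixed for the whole stream, so there is no fresh per-block uncertainty to feed a direct-sum or chain-rule argument. Also, $kw/h^2$ is not the entropy of $C$, which is $\log\binom{3h+k}{k}$.

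The paper's route is a one-message bound.
\begin{itemize}
\item It averages over players to find a single $i^*$ with $\E|Y_{i^*}\cap C|\le(\tfrac12+\epsilon_2)k$, even though $|Y_{i^*}|=(\tfrac12+\epsilon_1)(3h+k)$.
\item This $Y_{i^*}$ is computed only from one $M$-bit message and the unordered set $S_{i^*}\cup C$.
\item An encoding argument then applies: describe $C$ relative to $Y_{i^*}$, which costs about $\log\binom{|Y_{i^*}|}{Z}+\log\binom{3h+k-|Y_{i^*}|}{k-Z}$ extra bits with $Z=\E|Y_{i^*}\cap C|$.
\item This shows the bias requires $M\ge\Omega(k(\epsilon_1-\epsilon_2)^2)-O(\log(h+k))$, which is $\widetilde\Omega(kw/h^2)$ once $\epsilon_1\ge2\epsilon_2$.
\end{itemize}
Without the fingerprinting-versus-DP score comparison and this one-message encoding bound, your outline does not produce the $kw/h^2$ exponent.
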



We interpret the result as the following lower bound: with $k^{o(1)}$ bits of memory, the best achievable error must be off from the best upper bound by a polynomial factor. For example, consider the regime that $\sqrt{w} \approx k$. The known private algorithm \citet{CummingsEMMOZ25} solves the problem in this regime with an additive error of $O((\sqrt{w} + k) \cdot \mathrm{polylog}(T))= \widetilde{O}(k)$ with $\widetilde O(k)$ space. By  Theorem~\ref{thm: main} we know that any algorithm obtaining an error $h\le \sqrt{kw} \approx k^{3/2}$ must use a space lower bound of $k^{\Omega(1)}$.

One notable corollary of \Cref{thm: main} is the following: let $\alpha > 0$ be sufficiently small. By setting $\gamma_w = \frac{2}{3} - 4\alpha, \gamma_k = \frac{1}{3} - 2\alpha$ and $\gamma_h = \frac{1}{3} - \alpha$, we obtain a lower bound that is of order $\widetilde{\Omega}(T^{1/3-4\alpha})$. Formally: 
\begin{corollary}\label{corollary: T13 is tight}
    Let $\alpha \in \Paren{0, \frac{1}{9}}$. Any $(\varepsilon = 0.5, \delta = \frac{1}{T^2})$-DP algorithm for \textsf{CountDistinct} on $(w = T^{2/3-4\alpha}, k = T^{1/3-2\alpha})$-occurrency bounded stream that provides $(\tau = \frac{1}{10}, \eta = T^{1/3-\alpha})$-approximation guarantee with probability $1-\frac{1}{T^2}$ must use a space of $\widetilde{\Omega}(T^{1/3-4\alpha})$ bits.
\end{corollary}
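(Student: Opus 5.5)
This corollary should follow from \Cref{thm: main} by substitution. Set $\gamma_w = \frac{2}{3} - 4\alpha$, $\gamma_k = \frac{1}{3} - 2\alpha$ and $\gamma_h = \frac{1}{3} - \alpha$. Then $w = T^{\gamma_w}$, $k = T^{\gamma_k}$ and $h = T^{\gamma_h}$ are exactly the occurrency and error parameters in the statement. The privacy parameters $(\varepsilon=0.5,\delta=1/T^2)$, the multiplicative error $\tau = 1/10$ and the success probability $1-1/T^2$ are the same in both statements. So once the hypotheses on the exponents are checked, \Cref{thm: main} applies directly.

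I would verify the constraints of \Cref{thm: main} in turn.
\begin{itemize}
\item All three exponents lie in $(0,1)$. This needs $\alpha < \frac{1}{6}$ for $\gamma_w > 0$ and $\gamma_k>0$, which holds since $\alpha < \frac19$.
\item $\frac{1}{2}\gamma_w = \frac13 - 2\alpha = \gamma_k$, so the first inequality holds with equality.
\item $\gamma_k = \frac13 - 2\alpha \le \frac13 - \alpha = \gamma_h$ holds because $\alpha > 0$.
\item $\gamma_h < \gamma_w$ is equivalent to $\frac13 - \alpha < \frac23 - 4\alpha$, that is, $3\alpha < \frac13$, or $\alpha < \frac19$. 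This is precisely why the range of $\alpha$ is restricted.
\item $3\gamma_h = 1 - 3\alpha < 1$ holds since $\alpha>0$.
\end{itemize}

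The theorem then gives a space lower bound of $\widetilde{\Omega}(T^{\gamma_w+\gamma_k-2\gamma_h})$. The exponent evaluates to
\[
\Bigl(\tfrac23 - 4\alpha\Bigr) + \Bigl(\tfrac13 - 2\alpha\Bigr) - 2\Bigl(\tfrac13-\alpha\Bigr) = \tfrac13 - 4\alpha,
\]
which is the claimed bound.

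The only mild obstacle is the strict inequality $\gamma_h < \gamma_w$, since it is the sole source of the upper limit $\alpha<\frac19$. The theorem's requirement that $T$ be sufficiently large carries over unchanged into the $\widetilde{\Omega}$ statement.
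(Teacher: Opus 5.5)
Your proposal is correct and is exactly the paper's argument: substitute $\gamma_w = \frac23 - 4\alpha$, $\gamma_k = \frac13 - 2\alpha$, $\gamma_h = \frac13 - \alpha$ into \Cref{thm: main} to get the exponent $\gamma_w + \gamma_k - 2\gamma_h = \frac13 - 4\alpha$. Your explicit check of the theorem's hypotheses, in particular that $\gamma_h < \gamma_w$ is what forces $\alpha < \frac19$, is more detailed than the paper's one-line derivation but follows the same route.
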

One can use the algorithm from \cite{CummingsEMMOZ25} to solve \textsf{CountDistinct} on $(w = T^{2/3-4\alpha}, k = T^{1/3-2\alpha})$-occurrency bounded stream with error $\widetilde{O}(\sqrt{w}+k) \approx T^{1/3-2\alpha}$. The algorithm from \cite{CummingsEMMOZ25} uses $T^{1/3+O(\alpha)}$ bits of space, while \Cref{corollary: T13 is tight} says that any algorithm with accuracy better than $T^{1/3-\alpha}$ must use $T^{1/3-4\alpha}$ bits of space, demonstrating the algorithm from \cite{CummingsEMMOZ25} is nearly tight in this regime. 

Other than this choice of $h,k,w$, the trade-off we obtained in \Cref{thm: main} is not necessarily tight. Finding a tight trade-off between memory and accuracy on any bounded-occurrency stream remains an interesting open problem.



\textbf{Extension to more streaming problems.} We present similar results for the \textsf{MaxSelect} and \textsf{Quantile} problems.

\begin{theorem}\label{thm: main extension}
    Let $T\ge 1$, $w = T^{\gamma_w}, k = T^{\gamma_k}, h = T^{\gamma_h}$ and $\varepsilon = 0.5, \delta = \frac{1}{T^2}$ be set up the same way as in \Cref{thm: main}. Assume $A$ is an $(\varepsilon,\delta)$-DP  algorithm that for $(w,k)$-occurrency bounded streams that achieves $(\tau = \frac{1}{10}, \eta = h)$-approximation guarantee with probability $1-\frac{1}{T^2}$ for  \textsf{MaxSelect} (resp., \textsf{Quantile}).
    If $d\ge 2$ for \textsf{MaxSelect} (resp., if  $U\ge 2$ for \textsf{Quantile}), then the algorithm $A$ must use a space of at least $M\ge \widetilde{\Omega}(T^{\gamma_w + \gamma_k - 2\gamma_h}$) bits.
\end{theorem}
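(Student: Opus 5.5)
We prove \Cref{thm: main extension} by reduction to \Cref{thm: main}. We do not redo the communication-game argument. Instead, we show that any low-space private algorithm for \MaxSel or \textsf{Quantile} yields a low-space private algorithm for \CountDist on the hard instances. Crucially, the reduction is a \emph{streaming, user-by-user} transformation. Each update of the \CountDist stream is mapped to at most a constant number of updates of the target problem, involving the same user, and the auxiliary state used by the transformation is $O(\log T)$ bits. Such a map preserves neighboring relations: deleting all of a user's updates in the source deletes all of that user's updates in the image. So user-level $(\varepsilon,\delta)$-DP transfers by post-processing. Occurrency also grows only by a constant factor, so a $(w,k)$-bounded source stream maps to an $(O(w),k)$-bounded target stream, and the exponents $\gamma_w,\gamma_k,\gamma_h$ are unchanged up to constants absorbed in $\widetilde{\Omega}$.

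\textbf{MaxSelect ($d\ge 2$).} Feature $1$ encodes the \CountDist frequency: a $\pm1$ update of user $u$ becomes a flip of coordinate $1$ of $v_u$. This is valid because the hard instances have $f_i^t\in\{0,1\}$. Feature $2$ serves as a public ``ruler''. We prepend to the stream, using fresh dummy users, a controllable number $c$ of users holding feature $2$, and we adjust $c$ so that comparing features $1$ and $2$ answers a threshold question ``is $\|f^t\|_0 \gtrsim c$?''. A single ruler gives only a one-bit comparison. To recover an approximate count, the cleanest route is to run $O(\log T)$ parallel copies of $A$, one for each geometric threshold $c_\ell=(1+\tau)^\ell$, each on its own stream with its own dummy users. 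By composition this costs a factor $O(\log T)$ in both space and privacy. We then rescale $\varepsilon$ by group privacy or amplify, which is absorbed into polylog factors. This step needs a mild check: \Cref{thm: main}'s proof presumably tolerates constant-factor changes in $\varepsilon$, or we use disjoint threshold sets across time. The output of the copies, namely the largest $\ell$ at which feature $1$ is reported as winning, is a $(O(\tau),O(h))$-approximation of $\|f^t\|_0$. The dummy users each make one update, so they do not affect occurrency.

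\textbf{Quantile ($U\ge2$).} We proceed analogously. A user with $f_u=1$ holds item $1$ and otherwise holds item $2$, so a $\pm1$ update is a change of item. Dummy users holding item $2$ pad the population to a known size $N$. A rank query with parameter $k$ then determines whether the number of $1$'s is at least about $k$, because the returned item is $1$ exactly when the rank of $1$ is at least about $k$. Since $k$ is public and arbitrary per time step, we query geometrically many ranks across parallel copies, exactly as in the \MaxSel case, and recover $\|f^t\|_0$ to within $(1\pm O(\tau))$ and additive $O(h)$.

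\textbf{Main obstacle.} The delicate point is that the reduction should not lose more than polylog factors in space, privacy, or error. In particular, the multiplicative error $\tau=1/10$ of the target problem has to translate into a multiplicative error of \CountDist that is still acceptable. If the resulting constant exceeds $1/10$, we would instead verify that the proof of \Cref{thm: main} works for any constant $\tau<1$. This is plausible, since the hard instances presumably distinguish counts differing by large additive gaps. Finally, the space of the derived \CountDist algorithm is $O(M\log T)+O(\log T)$. \Cref{thm: main} lower-bounds this by $\widetilde{\Omega}(T^{\gamma_w+\gamma_k-2\gamma_h})$, which gives the claim.
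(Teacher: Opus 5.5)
\textbf{Gap: the derived algorithm does not satisfy the hypotheses of \Cref{thm: main}.} The failure is in your last step, where you feed the derived \CountDist algorithm into \Cref{thm: main} as a black box. That theorem is stated only for $\varepsilon=0.5$ and $\tau=\frac{1}{10}$, and your algorithm meets neither.

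\emph{Privacy.} All $L=\Theta(\log T)$ parallel copies of $A$ see every real user. Composition, basic or advanced with $\delta'=T^{-2}$, therefore gives only $\varepsilon=\Theta(\log T)$. This is not the constant-factor change you hoped the proof would tolerate. Neither proposed repair brings it back to $0.5$. Group privacy only ever increases $\varepsilon$. Amplification by subsampling users at rate $q$ yields $\ln(1+q(e^{\varepsilon}-1))$, which is $O(1)$ only for $q=T^{-\Omega(1)}$; that wipes out counts that are at most $3h+k$.

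\emph{Accuracy.} The loss here is not a constant you can tune away. Write $X=\|f^t\|_0$. Against a ruler of size $c$, a \MaxSel (or \textsf{Quantile}) answer is forced only when $c<(1-\tau)X-h$ or $c>(X+h)/(1-\tau)$; inside that window either answer is legal. A DP algorithm may threshold a noisy count at $0.9\tilde X$ or at $1.11\tilde X$. So any family of such bits pins $X$ down only to a multiplicative window of width about $(1-\tau)^{-2}\approx 1.23$, and no post-processing guarantees multiplicative error $\frac{1}{10}$. What you actually obtain is roughly $\tau'\approx 0.2$ and $\eta'\approx 1.1h$. Hence the sentence ``\Cref{thm: main} lower-bounds this'' is unjustified as written.

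\textbf{How to repair it.} Both defects can be fixed, but only by reopening the proof of \Cref{thm: main}, so the reduction no longer stands on its own.
\begin{itemize}
\item Inside that proof, privacy enters only through $\Pr[\xi>t]\le e^{\varepsilon}\Pr[\xi'>t]+\delta$, where $\xi'$ has a Hoeffding tail $\exp(-t^2/(2Pw))$. Consequently $e^{\varepsilon}=T^{O(1)}$ inflates $R$ and the heavy-user threshold only by constant factors.
\item \Cref{lemma: fingerprinting} needs only normalized error at most $\frac{2}{5}$. Your $\tau'\approx 0.2$ meets this once $|S_i|$ is enlarged to $Ch$ for a large constant $C$.
\item Cleaner still: on the hard instance the counts lie in $[0,|S_i\cup C|]$. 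So $O(1)$ linearly spaced rulers suffice, realized sequentially inside a single run of $A$ by adding and removing fresh dummy users between $O(1)$ queries per marginal. This already gives a $\frac{2}{5}$-accurate mean estimator, with no composition and no extra space.
\end{itemize}

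\textbf{Comparison with the paper.} The paper avoids rulers altogether. It asks a single one-bit query per marginal and replaces \Cref{lemma: fingerprinting} by \Cref{lemma: strong fingerprinting}, which requires correctness only on $0^n$ and $1^n$. This costs a $\log n$ factor in the correlation (hence $\log^2 n$ in the bound), but leaves $\varepsilon$, $\tau$, and the space of $A$ untouched.
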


We note that in \Cref{thm: main extension}, we only need the problems to be just slightly non-trivial (i.e., $d \ge 2$ for \textsf{MaxSelect} and $U\ge 2$ for \textsf{Quantile}) to design our lower bounds.


The rest of the section is focused on introducing the hard instance used in the proof of the theorem.

\subsection{Construction of Hard Instances} \label{sec:overview-construction}

Here we design our hard instances for the \textsf{CountDistinct} problem, noting that similar ideas apply to two other example problems with minimum modification. Let $w, k, h$ be such that $\sqrt{w}\le k\le h\le T^{1/3}$. Our goal is to construct a collection of $(w,k)$-occurrency bounded input streams, such that any low-space DP algorithm solving the \textsf{CountDistinct} estimation problem on those streams must incur a large error.

We will consider the following construction of hard instances. Let $[N]$ be the universe of users and select a subset $C\subseteq [N]$ of size $k$ uniformly at random. Elements of $C$ will be understood as the set of heavy users. Let $P = \frac{T}{2w(3h+k)}$. From our choice of $w, k$ and $h$, we obtain that $P \gg \frac{h^2}{w}$, a crucial condition that we need in the analysis.

Let $S_1,\dots, S_P$ be a collection of $P$ \emph{disjoint} subsets of $[N]\setminus C$, each of size $3h$, all randomly drawn.

\textbf{The stream.} We will construct a stream of length $T = 2P\cdot w(3h + k)$. Specifically, for every \emph{phase} $i= 1,\dots, P$ and every \emph{repetition} $j = 1,\dots, w$, we do the following:
\begin{enumerate}
    \item Nature samples $p(i,j)\sim [0,1]$. Then, for every $u\in S_{i}\cup C$, Nature samples a bit $b(u, i, j)\sim \mathrm{Ber}(p(i,j))$ independently (i.e., $b(u, i, j)=1$ with probability $p(i,j)$ and 0 otherwise).
    \item Consider all users from $S_i \cup C$. Process them in the increasing order of IDs: for each user $u$ such that $b(u,i,j) = 1$, insert an update $(u, +1)$. For every $u$ such that $b(u, i, j) = 0$, insert an update $\perp$.
    \item After all the insertions are completed, query the algorithm $A$ for the estimate of the current distinct count, let $r(i, j)$ be the output of the algorithm, which we assume has been truncated to the range $[0, 3h+k]$. We also denote $a(i, j) = \frac{r(i, j)}{3h + k}\in [0, 1]$ as the normalized output count. 
    \item Revoke (i.e., remove) all insertions incurred in Step (2) in the \emph{same order} in which they were inserted.
\end{enumerate}

\paragraph*{Intuition.} Here we explain why the construction is hard against a low-space private algorithm. Essentially, we think of the construction as consisting of $P$ stages, where in each stage the stream asks $w$ random \emph{one-way marginal} queries over users from $C\cup S_i$: namely, for $i\in [P]$ and $j\in [w]$, imagine we create an attribute for every user $u\in C\cup S_i$. Step (1) of the construction means that we set the attribute to ``one'' with probability $p(i, j)$, independently for every $u\in C\cup S_i$. Step (2) inserts all users that have this attribute as ``one''. Step (3) then queries for the current distinct count, which exactly corresponds to the number of users with a ``one'' attribute. Lastly, Step (4) revokes the updates we did in Step (2) so that the algorithm can prepare for the next one-way marginal query.

One can quickly verify that our construction indeed produces a $(2w, k)$-occurrency bounded stream: except for the users from $C$ (there are $k$ of them), every other user $u\in S_i$ participates in at most $w$ one-way marginal queries, where each query needs them to send at most two updates (one insertion and one deletion).

Now, in order to answer the queries up to additive error $h$, one can, for example, simply use the Laplace noise addition mechanism. Here, only counting users from the group $S_i$ yields almost equally good estimates: the additional error incurred is at most $|C|\le k$. In order to preserve privacy, this is the preferred option: if the algorithm were to take the users from $C$ into account in every stage, the privacy of $C$ will be quickly compromised after a few stages. However, in order for the algorithm to \emph{not} count elements of $C$, it seems necessary for the algorithm to be \emph{aware} of the identities of $C$. Since we have been careful in implementing Step (2) to hide $C$ among $C\cup S_i$, it seems the only viable option for the algorithm is to first \emph{learn} the set $C$ through the first few stages, and store it in the memory so that the algorithm can intentionally ignore their contribution in subsequent analyses. Notice that the storage of $C$ requires a large memory as it is a random large subset of $[N]$. Our goal is to show that the algorithm has no other low space alternative.



\section{Proof of Main Results}\label{sec:proof}

In this section, we prove our main results by analyzing the construction from Section \ref{sec:overview-construction}.

Our proof strategy is as follows. In \Cref{sec:proof-communication}, we introduce a communication game called \textsf{AvoidHeavyHitters} and proves its communication complexity lower bound. Then, in \Cref{sec:proof-fingerprint}, we show that any accurate low-space DP algorithm yields low-communication protocol for the game. Finally, in \Cref{sec:proof-main}, we put these together to derive \Cref{thm: main}. (All missing proofs are deferred to the appendix.)

\subsection{Multi-Player Communication Game of Avoiding Heavy Hitters}\label{sec:proof-communication}

The following abstract communication game will play a crucial role in our analysis.


\paragraph*{The \textsf{AvoidHeavyHitters} game.} Let $w,k, h, P, N$ be as above, and $M, \epsilon_1, \epsilon_2$ be additional parameters. Consider the following communication game with $P$ players: we generate subsets $C,S_1,\dots, S_P$ as described in \Cref{sec:overview-construction}. Then, the $i$-th player gets the unordered set $C\cup S_i$ as their input. The $P$ players jointly play a game with a referee: starting from $i=1$ in order, the $i$-th player receives a   message from player $(i-1)$ (if $i>1$) and then needs to send a set $Y_i\subseteq S_i\cup C$ of size at least $(\frac{1}{2} + \epsilon_1)\cdot |S_i\cup C|$ to the referee. After that, the player passes a message of length $M$ bits to the next player $(i+1)$, if $i< P$. The game concludes after the $P$-th player has committed their subset to the referee. The players jointly win the game if there is no $u\in C$ that appears in more than $(\frac{1}{2} + \epsilon_2)P$ submitted subsets.

Observe how the game is quite abstract and does not (explicitly) refer to the \CountDist problem. For this reason believe this technique can adapt multiple problems beyond the ones studied in the paper. Also notice how we do not impose any restrictions on the protocol in terms of resources (memory, time) except for the size of the messages ($M$).

We note that, if $\epsilon_1 = \epsilon_2$, then a simple strategy here is, for each player $i$, to simply pick a random subset $Y_i$ from $S_i \cup C$ of size $(\frac{1}{2} + \epsilon_1)\cdot |S_i\cup C|$. This baseline protocol requires no communication at all. Intuitively, for larger $\epsilon_2 > \epsilon_1$, the game becomes more challenging as this strategy fails and each player must pass on some information to help the next player (partially) identify $C$. Through a careful information theoretic arguments, we confirm this intuition and prove that the communication has to grow as $\Omega\left(k(\epsilon_1-\epsilon_2)^2\right)$:

\begin{lemma}\label{lemma: communication for avoid heavy}
    Consider the \textsf{AvoidHeavyHitters} game with $P$ players, and parameters $h, k\ge 1$ and $\epsilon_1,\epsilon_2 \in (0, 1)$ such that $|S_i| = 3h$, $|C| = k$. 
    Then, in order for the players to win the game with probability $1 - \frac{1}{10P}$, the communication complexity must satisfy
    \begin{align*}
        M &\ge \log\binom{3h+k}{k} - \log\binom{(3h+k)(\frac{1}{2} + \epsilon_1)}{\left( \frac{1}{2} -\epsilon_2 \right)k} \\ &\qquad- \log\binom{(3h+k)(\frac{1}{2}  - \epsilon_1)}{\left( \frac{1}{2} + \epsilon_2 \right)k} - O \left(\log(h+k)\right).
    \end{align*}
    When $h\ge k$ and $\epsilon_1,\epsilon_2 \in (\frac{1}{k}, \frac{1}{10})$, we may use Stirling's approximation to arrive at 
    \begin{align*}
        M\ge \Omega\left( k(\epsilon_1-\epsilon_2)^2 \right) - O(\log (h+k)).
    \end{align*}
\end{lemma}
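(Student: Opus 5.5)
The plan is to reduce winning the game to a single player ``guessing'' $C$ unusually well from an $M$-bit message, and then bound how well any fixed guess can do by counting $k$-subsets. For player $i$, write $X_i = C\cup S_i$, so $|X_i| = 3h+k$. The key structural fact is the following. Condition on the unordered set $X_i$. Then $C$ is a uniformly random $k$-subset of $X_i$, and $C$ is independent of which elements of $X_i$ are labeled $S_i$ beyond this. The message $m_{i-1}$ that player $i$ receives is a (possibly randomized) function of $C, S_1,\dots,S_{i-1}$ and takes at most $2^M$ values. The submitted set $Y_i$ is a function of $(m_{i-1}, X_i)$ and private randomness.

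\textbf{Step 1: one player, a fixed message.} Fix $X_i$ and a value $m$ of the message; by averaging over private randomness we may also fix the players' coins. Then $Y_i = Y$ is a fixed subset of $X_i$ with $|Y|\ge(\frac12+\epsilon_1)(3h+k)$, and we may take equality, since shrinking $Y$ only helps. Define the bad event $E(Y)$ to be $|Y\cap C|\le(\frac12+\epsilon_2)k$, i.e.\ $|C\setminus Y|\ge(\frac12-\epsilon_2)k$. Over a uniform $k$-subset $C$ of $X_i$, the number of such $C$ is at most $\binom{|X_i\setminus Y|}{(\frac12-\epsilon_2)k}\binom{|X_i|}{(\frac12+\epsilon_2)k}$ up to a factor $k+1$ from summing over the exact intersection size. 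After matching to the roles in the statement (the two binomials in the statement correspond to choosing the inside and outside parts of $C$), this gives
\[ q := \Pr_C[E(Y)] \le \mathrm{poly}(h+k)\cdot \binom{(3h+k)(\frac12+\epsilon_1)}{(\frac12-\epsilon_2)k}\binom{(3h+k)(\frac12-\epsilon_1)}{(\frac12+\epsilon_2)k}\Big/\binom{3h+k}{k}. \]

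\textbf{Step 2: removing the message.} Take a union bound over the at most $2^M$ message values. For each $m$ the set $Y=f(m,X_i)$ is fixed, so $\Pr[E(Y_i)] \le \sum_m \Pr_C[E(f(m,X_i))] \le 2^M q$. Note that we never need the conditional distribution of $C$ given $m$; the union bound over the $2^M$ deterministic candidate sets suffices.

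\textbf{Step 3: from winning to a single bad player.} This is the step I expect to be the main obstacle, because a naive union over all $P$ players would lose $\log P$ rather than $O(\log(h+k))$. Let $Z_i = |Y_i\cap C|/k\in[0,1]$. If the players win, every $u\in C$ lies in at most $(\frac12+\epsilon_2)P$ sets, so $\frac1P\sum_i Z_i\le\frac12+\epsilon_2$. Since winning has probability at least $1-\frac1{10P}$ and $Z_i\le 1$, we get $\frac1P\sum_i\E[Z_i]\le\frac12+\epsilon_2+\frac1{10P}$, so some player $i$ satisfies $\E[Z_i]\le\frac12+\epsilon_2+\frac{1}{10P}$. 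A reverse Markov inequality (using $Z_i\le 1$) then gives $\Pr[Z_i\le\frac12+\epsilon_2+\frac1k]\ge \Omega(1/k)$ whenever $P\gtrsim k$. The case of small $P$ is handled directly, since then $\log P = O(\log k)$ and the crude union bound over players already suffices. Shifting the threshold by $1/k$ moves the binomial indices by one, which changes $q$ by at most a $\mathrm{poly}(h+k)$ factor.

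Combining the steps gives $\Omega(1/k)\le 2^M q\,\mathrm{poly}(h+k)$, hence $M\ge\log(1/q)-O(\log(h+k))$, which is the first bound. For the second bound, apply Stirling's approximation $\log\binom{n}{\alpha n}=nH(\alpha)\pm O(\log n)$ and expand the resulting entropy expression around $\epsilon_1=\epsilon_2$. With $h\ge k$ and $\epsilon_1,\epsilon_2\in(\frac1k,\frac1{10})$, the leading terms cancel and a second-order (KL-divergence-type) term $\Omega(k(\epsilon_1-\epsilon_2)^2)$ remains, since $C$ is a $k$-subset whose fraction inside $Y$ must deviate from the natural value $\frac12+\epsilon_1$ down to $\frac12+\epsilon_2$.
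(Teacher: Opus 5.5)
There is a genuine gap in Step~1. Your displayed bound on $q$ is not derived; it is imposed ``after matching to the roles in the statement,'' and it does not follow from your count.

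Read literally, $\binom{|X_i\setminus Y|}{(\frac{1}{2}-\epsilon_2)k}\binom{|X_i|}{(\frac{1}{2}+\epsilon_2)k}$ lets the remaining elements of $C$ range over all of $X_i$ rather than $Y$. Since $\binom{n}{m}\ge (n/|Y|)^m\binom{|Y|}{m}$, for small $\epsilon_1$ this overcounts by a factor $2^{\Omega(k)}$. That swamps the $2^{-\Theta(k(\epsilon_1-\epsilon_2)^2)}$ you are after. The correct count is
\[
\Pr_C[E(Y)]=\binom{3h+k}{k}^{-1}\sum_{j\le(\frac{1}{2}+\epsilon_2)k}\binom{|Y|}{j}\binom{|X_i\setminus Y|}{k-j}.
\]
Replacing the sum by $(k+1)$ times its last term requires the summand to be increasing in $j$ on this range. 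That is, $(\frac{1}{2}+\epsilon_2)k$ must lie below the hypergeometric mode $\approx(\frac{1}{2}+\epsilon_1)k$, so you need $\epsilon_2\le\epsilon_1$ up to $O(1/k)$, a hypothesis you never invoke.

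What comes out is $\binom{(\frac{1}{2}+\epsilon_1)(3h+k)}{(\frac{1}{2}+\epsilon_2)k}\binom{(\frac{1}{2}-\epsilon_1)(3h+k)}{(\frac{1}{2}-\epsilon_2)k}$, with the lower indices paired the opposite way from your display. This is exactly what the paper's helper lemma gives with $k'=|B|=(\frac{1}{2}+\epsilon_1)(3h+k)$ and $Z=(\frac{1}{2}+\epsilon_2)k$.

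The pairing you wrote counts sets with $|C\cap Y|=(\frac{1}{2}-\epsilon_2)k$, a far rarer event than $E(Y)$, and no argument can establish it. For $\epsilon_2>\epsilon_1$ and $P$ large, independent uniformly random $Y_i$'s win with $M=0$. Yet that right-hand side is $\Omega(k(\epsilon_1+\epsilon_2)^2)-O(\log(h+k))$, which is positive for, e.g., constant $\epsilon$'s and large $k$. So the first display must be read with the indices swapped and with $\epsilon_2<\epsilon_1$. This holds where the paper applies the lemma, since there $\epsilon_1\ge 2\epsilon_2$, and the same hypothesis is needed for the $\Omega(k(\epsilon_1-\epsilon_2)^2)$ form. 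Your Stirling paragraph, in which the leading terms cancel at $\epsilon_1=\epsilon_2$, is consistent only with the corrected expression, not with the one you wrote.

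With Step~1 repaired, Steps~2 and~3 are sound and give a genuinely different route from the paper's. You union-bound over the $2^M$ message values, each yielding a guess independent of $C$ given $X_i$. You therefore need one player whose bad event has probability $\Omega(1/k)$, which is why you need the Markov step with $1/k$ slack and the separate small-$P$ case.

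The paper instead uses an encoding argument: $\log\binom{N}{k}=H(A)\le H(B)+H(A\mid B)$ with $H(B)\le M$ and $H(A\mid B)\le\log k+\E_I[\log\binom{|B|}{I}+\log\binom{N-|B|}{k-I}]$, followed by Jensen via log-concavity. This works directly with the averaged bound $\E|Y_{i^*}\cap C|\le(\frac{1}{2}+\epsilon_2)k$, so it needs neither a high-probability conversion nor a case split on $P$. Its final substitution of $Z$, however, relies on the same monotonicity below the mode, and hence on the same $\epsilon_2\le\epsilon_1$ requirement.
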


\subsection{From DP Streaming Algorithm to Communication Protocol}\label{sec:proof-fingerprint} 

 Having established a communication lower bound for the \AHH game, we make a connection between it and DP streaming algorithm for \CountDist. 

 Recall from \Cref{sec:overview-construction} that $r(i, j)$'s are the answers produced by the algorithm.
 The following main lemma shows how to convert accurate answers for \CountDist into sets $Y_i$ satisfying the constraints for \AHH:
\begin{lemma} \label{lem:rounding-one-phase}
Suppose that the algorithm $A$ satisfies accuracy guarantee as in \Cref{thm: main}.
Then, there exists a procedure \rounding that takes in $r(i, 1), \dots, r(i, w), S \cup C_i$, and produces $Y_i$ such that the following holds with probability $1 - \frac{1}{10P^2}$: $|Y_i| \geq \Paren{\frac{1}{2} + \epsilon_1} \cdot |S_i \cup C|$ and there is no $u \in C$ that appears in more than $\Paren{\frac{1}{2} + \eps_2}P$ submitted subsets for $\epsilon_1 = \Theta\Paren{\frac{\sqrt{w}}{h\sqrt{\log(Ph)}}}, \epsilon_2 = O\Paren{\frac{1}{\sqrt{P}}}$. 
\end{lemma}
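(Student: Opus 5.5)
The plan is a fingerprinting-style rounding: in each phase, score every user by how well their bits correlate with the algorithm's answers, then keep the top-scoring users.

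\textbf{The rounding rule.} For phase $i$ and each $u\in S_i\cup C$, define the score
\[
Z_u=\sum_{j=1}^{w}\bigl(a(i,j)-p(i,j)\bigr)\bigl(b(u,i,j)-p(i,j)\bigr).
\]
\rounding outputs as $Y_i$ the $(\frac12+\epsilon_1)|S_i\cup C|$ users with the largest scores. The size constraint then holds by construction, so all the work is in the second property. One point needs care first: \rounding must be computable by the player from $r(i,\cdot)$ and the set $S_i\cup C$. The $p(i,j)$ and $b$'s are not literally in the player's input, but the player can simulate them with private randomness. Indeed, the stream in phase $i$ is determined by $S_i\cup C$ together with the $p$'s and $b$'s, and the player runs $A$ itself, starting from the memory state received as the message.

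\textbf{Step 1: the correlations are large on average.} By accuracy, $a(i,j)\approx \frac{1}{3h+k}\sum_u b(u,i,j)$ up to error $O(h/(3h+k))$ plus the multiplicative slack, with probability $1-T^{-2}$. Apply the standard fingerprinting lemma for one-way marginals with $p\sim U[0,1]$ (as in Bun--Ullman--Vadhan / Steinke--Ullman). It gives
\[
\E\Bigl[\sum_{u}Z_u\Bigr]\gtrsim \frac{w}{3h+k}\cdot\bigl(\text{(number of users)}\cdot\tfrac{1}{3h+k}\bigr)-\text{error terms},
\]
that is, total correlation $\Omega(w)$ per phase, because the error $h$ is a constant fraction of $3h+k$ only if chosen carefully. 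Here I expect to need to account for the $1/10$ multiplicative error and the $\eta=h$ additive error via the $3h$ padding. I will then argue concentration: each $Z_u$ is a sum of $w$ bounded martingale-like terms, with fluctuation $O(\sqrt{w\log(Ph)})$. Since the average score exceeds the typical fluctuation by a ratio $\sqrt{w}/(h\sqrt{\log})$, thresholding produces a set whose average membership bias is $\epsilon_1=\Theta\bigl(\sqrt{w}/(h\sqrt{\log(Ph)})\bigr)$. Converting the sum-of-scores lower bound into a counting statement about the top fraction is an averaging step via Azuma, with a union bound over $u$ and phases giving failure probability $1/(10P^2)$.

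\textbf{Step 2: privacy makes heavy users indistinguishable from light ones.} I need that no $u\in C$ lands in more than $(\frac12+\epsilon_2)P$ of the $Y_i$. The membership event $u\in Y_i$ depends on $A$'s outputs over the whole stream. Since $u\in C$ is a single user, user-level DP with $\varepsilon=0.5$ applies to the full transcript. Replacing $u$'s updates by fresh independent bits (or $\perp$) changes the joint distribution of $(Y_1,\dots,Y_P)$ only by the DP factor. In the counterfactual world, $u$'s bits are independent of everything the algorithm sees. Then $u$'s rank in each phase is exchangeable with that of a random member, so $\Pr[u\in Y_i]\approx \frac12+\epsilon_1$, and conditional independence across phases given the transcript holds because each phase's $b(u,i,\cdot)$ are fresh. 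Azuma then gives at most $(\frac12+\epsilon_1)P+O(\sqrt{P\log})$ memberships. Transfer back through DP, paying $e^{0.5}$ and $\delta$ on the tail event.

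\textbf{Main obstacle.} Here I must reconcile the statement's $\epsilon_2=O(1/\sqrt P)$, which appears independent of $\epsilon_1$. Counterfactually, membership of $u$ should itself have bias $\approx\epsilon_1$ toward the threshold set, not $\tfrac12$. I will therefore compare against the light users in $S_i$, whose symmetric role suggests redefining the threshold relative to the median. This means choosing $Y_i$ as top scores but defining correctness so that a ``neutral'' user is included with probability exactly $\frac12$, with the extra $\epsilon_1$ mass coming only from users with positive fingerprint correlation. Making this precise, by showing that privacy forces heavy users' scores to have mean $\le$ the median score rather than the top-set cutoff, is where I expect the real difficulty. I would first try to set the threshold at a data-independent value ($Z_u>0$ plus a shift), so that a private user is included with probability $\le\frac12+o(1)$. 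Meanwhile the fingerprinting lemma ensures that at least $(\frac12+\epsilon_1)$ of all users exceed it.
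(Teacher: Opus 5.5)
There is a genuine gap, and it is the one you flag yourself: neither rounding rule you propose gives $\epsilon_2=O(1/\sqrt P)$, and the fixed-threshold fallback also loses the size bound.

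\emph{Top-set selection.} The cutoff is data-dependent, so the best you can say for $u\in C$ is an inclusion probability of about $\frac12+\epsilon_1$ per phase. That makes $\epsilon_2\approx\epsilon_1$, which is useless for the game. The exchangeability claim is also wrong: in the counterfactual world $u$'s score is uncorrelated with the answers, while the other members' scores are not.

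\emph{Fixed threshold $Z_{u,i}>\theta$.} This fails in both directions because the indicator is a nonlinear function of the score. On the size side, fingerprinting plus Azuma only lower-bounds the \emph{sum} $\sum_u Z_{u,i}\gtrsim w$. With $|Z_{u,i}|\le R$, that is compatible with fewer than half the users exceeding the threshold, since a minority with scores near $+R$ can carry the whole sum. So it does not give $(\frac12+\epsilon_1)|S_i\cup C|$ users above $\theta$. On the privacy side, DP only controls \emph{linear} statistics such as the total correlation over all $(i,j)$. In the counterfactual world, $\Pr[Z_{u,i}>0\mid\text{transcript}]$ is the probability that a sum of independent, skewed, mean-zero terms with algorithm-chosen coefficients $a(i,j)-p(i,j)$ is positive. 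Nothing forces this to be within $O(1/\sqrt P)$ of $\frac12$: the accuracy slack is a constant, so an accurate algorithm can push the median of $Z_{u,i}$ away from its mean. Such a per-phase bias then accumulates linearly over the $P$ phases, not as $\sqrt P$.

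The missing idea is to make the rounding \emph{linear} in the score, with either $a$ or $a-p$ as the weight. Take $R=\sqrt{w\log(2/\beta)}$. By a DP comparison with fresh bits $b'$, every per-phase score satisfies $|Z_{u,i}|\le R$ with probability $1-O(\beta+\delta)$. A plain martingale argument does not suffice here, because in the real world $a$ depends on $b(u,\cdot)$. Now include each $u$ in $Y_i$ independently with probability $\frac{R+Z_{u,i}}{2R}\in[0,1]$.

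\emph{Size.} $\E|Y_i|=\frac{|S_i\cup C|}{2}+\frac{1}{2R}\sum_u Z_{u,i}$. The Azuma/fingerprinting bound $\sum_u Z_{u,i}\ge w/20$ then gives the fraction $\frac12+\Omega\Paren{\frac{w}{R(3h+k)}}=\frac12+\Omega\Paren{\frac{\sqrt w}{h\sqrt{\log(1/\beta)}}}$ directly.

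\emph{Heavy users.} For $u\in C$, the expected number of memberships is $\frac P2+\frac{1}{2R}\sum_i Z_{u,i}$. A single DP comparison over the whole transcript bounds the total correlation $\sum_{i,j}a(i,j)(b(u,i,j)-p(i,j))$ by $\sqrt{Pw\log(1/\beta)}$. This gives $\frac P2+O(\sqrt P)$ with no per-phase reasoning at all.

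Concentration of the independent rounding coins and a union bound finish both claims. Linearity is exactly what lets the expectation-level facts (fingerprinting from below, DP from above) transfer unchanged to set sizes and membership counts; any rank- or threshold-based rule gives this up.
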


Before we describe this ``rounding'' procedure, let us first note that it can be easily turned into a communication protocol by letting player simulates the algorithm on one phase of the stream, as formalized below.

\begin{corollary} \label{cor:proc}
Let the parameters be as in \Cref{lem:rounding-one-phase}.
Suppose that the algorithm $A$ satisfies accuracy guarantee as in \Cref{thm: main}, and let $M$ denote its space complexity.
Then, there is a protocol with communication complexity $M$ that wins \AHH with probability $1 - \frac{1}{10P}$.
\end{corollary}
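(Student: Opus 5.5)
We build the protocol by letting the players jointly simulate $A$ on the hard stream of \Cref{sec:overview-construction}, one phase per player, passing $A$'s memory state as the message. Player $i$ knows only the unordered set $C\cup S_i$, and this is enough to generate phase $i$. Using shared public randomness, or randomness sampled locally, player $i$ draws $p(i,j)$ and the bits $b(u,i,j)$ for all $u\in C\cup S_i$ and $j\in[w]$. The insertions are then emitted in increasing ID order, which the player can do from the set alone, without knowing which elements are heavy. The player feeds these updates and the revocations to $A$, starting from the memory state received from player $i-1$ (for $i=1$, the initial state). They record the outputs $r(i,1),\dots,r(i,w)$ and truncate them to $[0,3h+k]$. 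Next they apply \rounding from \Cref{lem:rounding-one-phase} to $r(i,\cdot)$ and $C\cup S_i$, submit the resulting $Y_i$ to the referee, and send $A$'s memory state, at most $M$ bits, to player $i+1$. If $A$ uses internal randomness, we fix it through public coins. Alternatively, since a randomized protocol winning with some probability implies a deterministic one winning with at least that probability over the input distribution, we fix the best coins afterward.

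The key observation is that the joint execution is distributed exactly as a single run of $A$ on the full random stream. The state passed between phases is the entire state of $A$, so concatenating the phases reproduces the stream $x$ with the same distribution over $C,S_1,\dots,S_P,p,b$. In particular, every quantity \Cref{lem:rounding-one-phase} refers to, including the outputs of all phases and the sets $Y_1,\dots,Y_P$, has the same joint distribution as in the centralized analysis. We can therefore invoke \Cref{lem:rounding-one-phase} directly for each phase, and the same coupling lets us apply its global guarantee on $C$.

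For the success probability, \Cref{lem:rounding-one-phase} gives, for each $i$, that with probability $1-\frac{1}{10P^2}$ the size condition $|Y_i|\ge(\frac12+\epsilon_1)|S_i\cup C|$ holds and no $u\in C$ appears in more than $(\frac12+\epsilon_2)P$ of the submitted subsets. A union bound over the $P$ phases makes all events hold simultaneously with probability at least $1-\frac{P}{10P^2}=1-\frac{1}{10P}$. On that event the players win \AHH, and the communication is $M$ bits per message.

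The main point of care is the simulation step. Player $i$ must generate the phase-$i$ stream without knowing $C$, and this is exactly why the construction orders updates by ID over $C\cup S_i$ and makes the heavy and light users statistically symmetric within a phase. I would also check two things: that the time indices and phase lengths are fixed publicly, so that $A$'s continual-release outputs line up, and that the accuracy guarantee, stated for a fixed stream, averages correctly over the random stream. Both are routine.
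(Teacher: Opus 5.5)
Your proposal is correct and follows the paper's proof. In both, the players simulate $A$ phase by phase, pass its memory configuration as the $M$-bit message, apply \rounding, and union-bound the per-phase failure probability $\frac{1}{10P^2}$ over the $P$ phases. The one small difference is that the paper has a player explicitly abort (counted as a loss) when $|Y_i| < (\frac{1}{2}+\epsilon_1)|S_i\cup C|$, so the protocol never submits an illegal set, whereas you only note that the players win on the good event (padding $Y_i$ arbitrarily would serve equally well).
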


\begin{proof}
For the $i$-th player, the protocol works as follows:
\begin{itemize}
\item The player receives the memory configuration of the algorithm from the $(i-1)$-th player (if $i > 1$), as well as the set $S_i\cup C$ from the referee.
\item The player then \emph{constructs} the $i$-th phase of the stream as described in \Cref{sec:overview-construction}: Importantly, the player can construct the inputs only with the knowledge of $S_i\cup C$, without telling $S_i$ apart from $C$.
\item The player then feeds the updates to $A$ and collects responses $r(i, 1), \dots, r(i, w)$ from it. After that, the player uses \rounding to compute $Y_i\subseteq S_i\cup C$.
\item If $|Y_i| < \left( \frac{1}{2} + \epsilon_1 \right) |S_i\cup C|$, the player aborts the protocol (and loses). Otherwise, the player sends $Y_i$ to the referee, and the memory configuration to the next player (if $i < P$).
\end{itemize}

By \Cref{lem:rounding-one-phase} and the union bound over all phases, this protocol wins the game with probability $1 - \frac{1}{10P}$. We remark that in the reduction, we only use the space bound of $A$ to upper-bound the length of the messages the players send and receive. In particular, to make up the queries, collect and process the responses, the players are allowed to use arbitrarily large working memory and time. 
\end{proof}

\subsubsection{Correlation Analysis and Rounding Algorithm}\label{sec:proof-correlation}

The remainder of this subsection is devoted to the description and analysis of \rounding.

\textbf{Fingerprinting Lemma.} The main ingredient required for \rounding is the well-known fingerprinting lemma:
\begin{lemma}[See, e.g., \citet{BunUV18,DworkSSUV15}]\label{lemma: fingerprinting}
    Let $f:\{0,1\}^{n}\to [0,1]$ be a function. Suppose for every $x\in \{0,1\}^{n}$, it holds that $|f(x) - \frac{1}{n}\sum_i x_i|\le \frac{2}{5}$. Then, it holds that
    \begin{align}
        \E_{p\in [0,1]}\E_{x\sim \mathrm{Ber}(p)^n}\left[ f(x) \sum_{1\le i\le n} \left(  x_i - p \right) \right] \ge \frac{1}{10}.\label{equ: fingerprinting lemma}
    \end{align}
\end{lemma}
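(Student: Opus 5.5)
The plan is the standard ``derivative of the expected output'' argument. Set $g(p) := \E_{x\sim \mathrm{Ber}(p)^n}[f(x)]$. This is a polynomial in $p$, and since $\frac{\partial}{\partial p}\Pr[x] = \Pr[x]\cdot\frac{\sum_i (x_i - p)}{p(1-p)}$, we get
\begin{align*}
    g'(p)\, p(1-p) = \E_{x\sim \mathrm{Ber}(p)^n}\Big[f(x)\sum_{i}(x_i-p)\Big].
\end{align*}
The left-hand side of \eqref{equ: fingerprinting lemma} is therefore $\int_0^1 g'(p)\,p(1-p)\,dp$. The boundary term vanishes, so integrating by parts rewrites it as $\int_0^1 g(p)(2p-1)\,dp$.

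Next I would use the accuracy hypothesis. Averaging $|f(x)-\frac1n\sum_i x_i|\le \frac25$ over $x\sim\mathrm{Ber}(p)^n$ gives $g(p) = p + e(p)$ with $|e(p)|\le \frac25$. The main term is $\int_0^1 p(2p-1)\,dp = \frac16$. The error term is $\int_0^1 e(p)(2p-1)\,dp$, and the natural bound for it is $\frac25\int_0^1|2p-1|\,dp = \frac15$.

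\textbf{Main obstacle.} The pointwise bound on $e$ gives only $\frac16 - \frac15 < 0$, so the naive route cannot reach $\frac1{10}$ with the constant $\frac25$. It is not merely a lossy estimate: this route is exactly tight. Consider $f(x) = \frac1n\sum_i x_i + \frac25$ when $\frac1n\sum_i x_i<\frac12$, and $f(x)=\frac1n\sum_i x_i - \frac25$ otherwise. This $f$ maps into $[0,1]$ and meets the hypothesis. By concentration of $\frac1n\sum_i x_i$ around $p$, for large $n$ it makes $g(p)\approx p+\frac25\,\mathrm{sign}(\frac12-p)$. The integral is then $\approx \frac16-\frac15 = -\frac1{30}$. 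So \eqref{equ: fingerprinting lemma} cannot be proved as literally stated with accuracy $\frac25$ and bound $\frac1{10}$.

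The proof I would actually carry out goes through the identity above and lands on the constant the argument supports. With accuracy $a$ in place of $\frac25$, the argument gives $\frac16-\frac a2$, which is positive only for $a<\frac13$. The usable alternative is the standard ``accuracy-in-expectation'' fingerprinting bound. Here one adds $\E[(f(x)-\frac1n\sum_i x_i)^2]$ to the left-hand side and uses $\int_0^1 (g(p)-p)(2p-1)\,dp \ge -\int_0^1 \E[(f-\frac1n\sum_i x_i)^2]\,dp - O(1/n)$, via Cauchy--Schwarz and $\mathrm{Var}(\frac1n\sum_i x_i)=p(1-p)/n$. That yields a positive constant. Downstream, in \rounding, the hypothesis should be instantiated with the normalized answers $a(i,j)$, whose error is about $0.1 + \frac{h}{3h+k}\le \frac13 + 0.1$. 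One therefore has to check which accuracy constant the application really guarantees and adjust the lemma's constants to match, rather than rely on the $\frac25$ version.
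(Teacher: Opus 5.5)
You are right that the statement is false as written, so it cannot be proved. The paper gives no proof of it; it only cites prior work. Your identity $\E_x[f(x)\sum_i(x_i-p)]=p(1-p)g'(p)$ is correct, as are the integration by parts to $\int_0^1(2p-1)g(p)\,dp$ and your step-function counterexample.

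There is also a non-asymptotic counterexample, valid for every $n$. Take $f(x)=\frac25+\frac15\bar x$ with $\bar x=\frac1n\sum_i x_i$. It maps into $[\frac25,\frac35]$ and satisfies $|f(x)-\bar x|=|\frac25-\frac45\bar x|\le\frac25$. It gives exactly $\frac15\int_0^1p(1-p)\,dp=\frac1{30}$; for $n=1$ this is just $f(0)=\frac25$, $f(1)=\frac35$.

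Your bound $\frac16-\frac a2$ yields the correct $\{0,1\}$ versions: accuracy $\frac15$ gives $\frac1{15}$, and accuracy $\frac2{15}$ gives $\frac1{10}$. The stated pair $(\frac25,\frac1{10})$ is valid in the $\{\pm1\}^n\to[-1,1]$ normalization with $p$ uniform on $[-1,1]$. There the same integration by parts turns the left-hand side into $\int_{-1}^1p\,g(p)\,dp\ge\frac23-\frac25=\frac4{15}$. Plausibly the paper transcribed that version into $\{0,1\}$ without rescaling. Your argument uses only $\sup_p|g(p)-p|\le a$, which is exactly the ``useful'' condition the appendix checks, so the pointwise route is the one to keep.

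One step of your alternative is wrong as displayed. The inequality $\int_0^1(g(p)-p)(2p-1)\,dp\ge-\int_0^1\E[(f-\bar x)^2]\,dp-O(1/n)$ fails for $f=(1-2\epsilon)\bar x+\epsilon$: the left side is $-\epsilon/3$ and the right side is $-\epsilon^2/3-O(1/n)$. An additive constant is unavoidable. Using $e(2p-1)\ge-e^2-\frac14(2p-1)^2$ together with Jensen gives $\E[f(x)\sum_i(x_i-p)]+\E[(f(x)-\bar x)^2]\ge\frac1{12}$. You do not need this detour anyway.

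Your downstream concern is the more consequential one, and it is real. With $|S_i|=3h$ and $k\le h$, the normalized error $\frac1{10}+\frac{h}{3h+k}$ is at least $\frac7{20}>\frac13$, outside the range where your bound is positive. For $k\ll h$, exploiting the multiplicative form of the error does not rescue it either. Set $\eta'=\frac{h}{3h+k}$ and take the step function equal to $1.1\bar x+\eta'$ below $\frac12$ and $0.9\bar x-\eta'$ above. As $n\to\infty$ it drives the correlation to $\frac16-\frac1{40}-\frac{\eta'}2$, which is negative once $\eta'>\frac{17}{60}$.

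So the repair belongs in the construction. For example, $|S_i|=10h$ makes the normalized error at most $\frac15$, so the $\frac1{15}$ version applies, and it changes only constants in \Cref{thm: main}.
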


This simple yet powerful lemma is typically interpreted as follows. Imagine $f$ is a mean-estimator: it receives $n$ inputs bits and approximates their mean. The assumption of \Cref{lemma: fingerprinting} says that $f$ should be a somewhat accurate estimator. The conclusion of the lemma, on the other hand, implies that \emph{any} such estimator must have a noticeably high correlation with its inputs. 


\textbf{Measuring correlation.} Let $\beta = \frac{1}{T^2}$ be a small parameter for analysis. Looking into the construction of \Cref{sec:overview-construction}, let $u\in C$ be a heavy user. Consider the following thought experiment: we sample independently $b'(u,i,j)$ in the same way as we sample $b(u,i,j)$. Then, the quantity 
\begin{align}
    \xi' = \sum_{i\in[P],j\in[w]} a(i,j) \cdot (b'(u,i,j)-p(i,j))  \label{equ: total score for heavy}
\end{align} 
can be understood as the summation of $P\cdot w$  bounded (in $[-1, 1]$), 
mean-zero and independent random variables (over the randomness $p(i, j), a(i, j), b'(u, i, j)$). As such, it is easily seen to be $(\sqrt{Pw})$-subgaussian\footnote{See \Cref{appendix:technical-prelminary} for definition and properties about sub-gaussian r.v.s.} and we have that $\Pr[\xi'\le \sqrt{Pw \log(2/\beta)}] \ge 1-\beta$. 

If the algorithm is $(\eps,\delta)$-DP, it should not depend strongly on $b(u,i,j)$'s for any single user $u$. This is to say, the output distribution of the algorithm should not change too much when we change the input\footnote{Technically, $b(u, i, j)$ does not directly constitute the input stream. However, from the construction we see that the inputs are constructed directly from $b(u, i, j)$'s. Therefore, we think of $b$ as the ``true hidden input'' to the algorithm.} from $b(u,i,j)$ to $b'(u,i,j)$. Consequently, define the random variable $\xi=\sum_{i,j}a(i,j)\cdot (b(u,i,j) - p(i, j))$, we have
\begin{align*}
    &~~~~ \Pr[\xi > \sqrt{ Pw\log(2/\beta)} ] \\
    &\le e^{\varepsilon}\Pr[\xi'> \sqrt{Pw\log(2/\beta)}] + \delta \\
    &< 2\beta + \delta.
\end{align*}
Similarly, for any $i\in [P]$ and a light user $u\in S_i$, the quantity $\sum_{j\in [w]} a(i,j) (b(u,i,j)-p(i,j))$ is bounded in absolute value by $\sqrt{w \log(2/\beta)}$ with probability $1-2\beta-\delta$. 

\textbf{The \rounding Procedure.} 
We are now ready to introduce the rounding trick: after the conclusion of one stage $i\in [P]$, we can measure the quantity
\begin{align}
    \sum_{u\in S_i\cup C} \sum_{j\in [w]} a(i,j)\cdot (b(u,i,j)-p(i,j)) \label{equ: sum all scores in one stage}.
\end{align}
For each $u\in S_i\cup C$, the inner sum is bounded in absolute value by $R = \sqrt{w\log(2/\beta)}$ with probability $1-O(\beta + \delta)$. We condition on the event that all the inner sums are bounded by $R$, and we sample a random set $Y_i$ as follows: for each $u\in S_i\cup C$, include $u$ in $Y_i$ with probability $\frac{R + \sum_{j\in [w]} a(i,j)\cdot (b(u,i,j)-p(i,j))}{2R}\in [0, 1]$. 

As it will turn out, from the sampling procedure we usually produces sets $Y_i$ of size $|S_i\cup C|(\frac{1}{2} + \widetilde{\Omega}(\frac{\sqrt{w}}{h}))$. To quickly grasp a quantitative intuition, note that if the algorithm is accurate, we can use \Cref{lemma: fingerprinting} to deduce that \eqref{equ: sum all scores in one stage} is lower bounded in expectation by $\Omega(w)$ (as \Cref{equ: sum all scores in one stage} corresponds to $w$ independent queries). That means the size of the sampled subset deviates from $\frac{1}{2}$ by $\frac{w}{R\cdot (3h+k)} \approx \frac{\sqrt{w}}{h}$. On the other hand, by the DP property of the algorithm, for each heavy user $u\in C$ the total correlation score it received (as in \eqref{equ: total score for heavy}) is at upper bounded by $\sqrt{Pw \log(2/\beta)}$. As such, in expectation it appears in at most $\frac{1}{2}P + \frac{\sqrt{Pw \log(2/\beta)}}{R} = \frac{1}{2}P + O(\sqrt{P})$ submissions.
See \Cref{appendix:proof-correlation} on how to convert the in-expectation calculation into a high-probability statements.

\subsection{Proof of \Cref{thm: main}} \label{sec:proof-main}

\Cref{thm: main} follows from \Cref{lemma: communication for avoid heavy,cor:proc}. Recall that $\epsilon_1 = \Theta\Paren{\frac{\sqrt{w}}{h\sqrt{\log(Ph)}}}, \epsilon_2 = O\Paren{\frac{1}{\sqrt{P}}}$ from \Cref{cor:proc}. For the parameters in \Cref{thm: main}, we have
\begin{align*}
        \frac{\epsilon_1}{\epsilon_2} \ge \Omega\left( \frac{\sqrt{w} \cdot \sqrt{T}}{h \cdot \sqrt{\log T} \cdot  \sqrt{w(3h + k)}} \right) \ge 2.
    \end{align*}

    Therefore, by \Cref{lemma: communication for avoid heavy}, we obtain the lower bound
    \begin{align*}
        M\ge \widetilde{\Omega}(k ( \epsilon_1 - \epsilon_2)^2) = \widetilde{\Omega}\left( \frac{kw}{h^2} \right).
    \end{align*}

\section{Conclusions}
In this work, we established the first unconditional space lower bounds for user-level differentially private streaming algorithms. 
This addresses a recent open problem by demonstrating that the polynomial space usage of existing private algorithms is essentially optimal (in certain regimes of \textsf{CountDistinct}) and establishes an exponential separation compared to non-private counterparts. Our proof hinges on a new communication game that shows that the memory-intensive task of tracking ``over-active'' users is fundamentally necessary for privacy in many instances. Furthermore, we showed this framework generalizes to problems like MaxSelect and Quantile estimation, indicating that high memory costs are inherent to a broad class of private continuous monitoring tasks. We believe our proof technique can be applied to many other problems that have so far escaped low space private algorithms.

\newpage

\subsection*{Impact statement}
This paper presents work whose goal is to advance the field of machine learning. There are many potential societal consequences of our work, none of which we feel must be specifically highlighted here.

\bibliography{ref}
\bibliographystyle{icml2026}
\balance
\newpage

\textbf{\LARGE Appendix}

\appendix
\section{Missing Proofs from \Cref{sec:proof}}

\subsection{Technical preliminaries} \label{appendix:technical-prelminary}

\begin{definition}[Subgaussian r.v.]
    A real-valued random variable $X$ is called $b$-subgaussian, if $\Pr[|X|>t] \le 2\exp(-t^2/b^2)$ for every $t > 0$.
\end{definition}

\begin{lemma}[Sum of Subgaussian r.v.'s]
    Suppose $X_1,\dots, X_N$ are $N$ independent random variables. If $X_i$ is $b_i$-subgaussian for $i\in [N]$, then $\sum X_i$ is $\sqrt{\sum {b_i}^2}$-subgaussian.
\end{lemma}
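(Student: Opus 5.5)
The plan is the standard route through the moment generating function: tail bound $\Rightarrow$ MGF bound for each $X_i$, multiply the MGFs using independence, then Chernoff back to a tail bound. Before starting I will make two points explicit, since the statement needs them to be true.

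First, the summands must be \emph{mean-zero}, as they are everywhere the lemma is used in the paper, e.g.\ the terms $a(i,j)(b'(u,i,j)-p(i,j))$. Without this the claim fails. A constant $X_i\equiv b/2$ satisfies the tail condition with parameter $b$. Yet $N$ copies sum to $Nb/2$, which is not $O(\sqrt N b)$-subgaussian.

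Second, with the tail-based definition above, the conclusion can only hold up to a universal constant. So I will prove that $\sum_i X_i$ is $C\sqrt{\sum_i b_i^2}$-subgaussian for an absolute $C$. This is all that the applications need: they only claim bounds like $\Pr[\xi'>\sqrt{Pw\log(2/\beta)}]\le\beta$ up to constant factors inside the square root.

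The key steps, in order:

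\emph{Step 1 (tail $\Rightarrow$ moments).} For each $i$, integrate the tail bound:
\[
\E|X_i|^p=\int_0^\infty p t^{p-1}\Pr[|X_i|>t]\,dt\le 2p\int_0^\infty t^{p-1}e^{-t^2/b_i^2}dt = p\,\Gamma(p/2)\, b_i^p .
\]

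\emph{Step 2 (moments $\Rightarrow$ MGF).} Expand $\E e^{\lambda X_i}=1+\lambda\E X_i+\sum_{p\ge2}\lambda^p\E X_i^p/p!$. The linear term vanishes by the mean-zero assumption. Using Step 1 together with $\Gamma(p/2)/p!\le (c/p)^{p/2}$-type estimates, the remaining series is bounded by $1+\sum_{q\ge1}(c'\lambda^2b_i^2)^q/q!\le e^{c'\lambda^2 b_i^2}$. This holds for all real $\lambda$, treating even and odd $p$ separately, or simply bounding odd moments by the neighbouring even ones via Cauchy--Schwarz.

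\emph{Step 3 (independence and Chernoff).} By independence,
\[
\E e^{\lambda\sum_i X_i}=\prod_i\E e^{\lambda X_i}\le e^{c'\lambda^2\sum_i b_i^2}.
\]
Markov's inequality with the optimal $\lambda=t/(2c'\sum_i b_i^2)$ gives $\Pr[\sum_i X_i>t]\le \exp\bigl(-t^2/(4c'\sum_i b_i^2)\bigr)$. The same bound holds for $-\sum_i X_i$, so a union bound yields the factor $2$. This is exactly the definition with parameter $2\sqrt{c'}\sqrt{\sum_i b_i^2}$.

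The only step with real content is Step 2. Tracking the constant through the Gamma-function/factorial comparison is the mildly fiddly part, and it is where mean-zero is essential. If the constants become messy, the fallback for the paper's actual uses is immediate: every summand there is bounded in $[-1,1]$ and mean zero, so Hoeffding's lemma gives $\E e^{\lambda X_i}\le e^{\lambda^2/2}$ directly, and Step 3 then applies verbatim.
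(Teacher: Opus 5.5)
The paper gives no proof of this lemma; it is listed among the technical preliminaries as a standard fact. Your argument is the standard textbook one: tail bound $\Rightarrow$ moment bound $\Rightarrow$ MGF bound for centered variables, then independence and a Chernoff bound. It is correct, including $\E|X_i|^p\le p\,\Gamma(p/2)\,b_i^p$, absorbing constants into $e^{c'\lambda^2 b_i^2}$, and the optimization giving parameter $2\sqrt{c'}\sqrt{\sum_i b_i^2}$.

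Both of your caveats are genuinely necessary, so the lemma as printed is false, not merely terse. Your constant example rules out dropping the centering. Even for centered variables the constant cannot be $1$ under the paper's tail-based definition. I.i.d.\ Rademacher signs are exactly $1/\sqrt{\ln 2}$-subgaussian in that sense. Yet by the CLT, $\Pr[|\sum_{i\le N}X_i|>4\sqrt N]\to 2\bar\Phi(4)\approx 6.3\cdot 10^{-5}$, with $\bar\Phi$ the standard Gaussian upper tail. This exceeds the claimed bound $2e^{-16\ln 2}=2^{-15}\approx 3.1\cdot 10^{-5}$. With an MGF-based definition the constant-free statement would hold exactly, and the product step of your Step 3 would be the entire proof. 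It is the tail-based definition that forces the constant loss in your Steps 1--2.

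One remark on your fallback for the paper's application. The terms $a(i,j)(b'(u,i,j)-p(i,j))$ are not unconditionally independent, since $a(i,j)$ depends on the history. They are, however, independent and centered conditionally on all the $a(i,j)$ and $p(i,j)$, because $b'$ is independent of the algorithm's run given the $p$'s. So applying Hoeffding conditionally still yields the bound the paper needs.
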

\begin{lemma}[Azuma's inequality]\label{lemma: azuma}
    Suppose $X_0, X_1,\dots, X_N$ is a submartingale\footnote{Recall that $(X_0,\dots, X_N)$ is a submartingale if $\E[X_{i+1}\mid X_i]\ge X_i$ for every $0\le i\le N-1$.}, such that $|X_i - X_{i-1}| \le B$ almost surely. Then
    \begin{align*}
        \Pr[X_N - X_0 \le -t] \le \exp\left(-\frac{t^2}{2NB^2}\right).
    \end{align*}
\end{lemma}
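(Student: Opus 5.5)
We would prove this by the standard exponential-moment (Chernoff) method applied to the martingale differences. Throughout we interpret the submartingale condition with respect to the natural filtration $\mathcal{F}_i = \sigma(X_0,\dots,X_i)$, i.e.\ $\E[X_{i+1}\mid \mathcal{F}_i]\ge X_i$. This is the standard reading of the footnote's condition and is what the argument needs. Set $D_i = X_i - X_{i-1}$ for $i\in[N]$. The hypotheses then say $\E[D_i\mid\mathcal{F}_{i-1}]\ge 0$ and $|D_i|\le B$ almost surely.

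\textbf{Step 1 (one-step bound).} We apply the conditional form of Hoeffding's lemma. If $Y\in[a,b]$ almost surely, then for every real $s$,
\[
\E[e^{sY}\mid\mathcal{F}] \le \exp\!\left(s\,\E[Y\mid\mathcal{F}] + \frac{s^2(b-a)^2}{8}\right).
\]
Take $Y = D_i$, $[a,b]=[-B,B]$, $s=-\lambda$ with $\lambda>0$, and condition on $\mathcal{F}_{i-1}$. Since $\E[D_i\mid\mathcal{F}_{i-1}]\ge 0$, the linear term $-\lambda\,\E[D_i\mid\mathcal{F}_{i-1}]$ is nonpositive and can be dropped. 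This gives
\[
\E[e^{-\lambda D_i}\mid\mathcal{F}_{i-1}]\le e^{\lambda^2B^2/2}.
\]

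\textbf{Step 2 (iterate).} Write $e^{-\lambda(X_N-X_0)} = e^{-\lambda(X_{N-1}-X_0)}\,e^{-\lambda D_N}$. The first factor is $\mathcal{F}_{N-1}$-measurable, so by the tower property and Step 1,
\[
\E[e^{-\lambda(X_N-X_0)}]\le e^{\lambda^2B^2/2}\,\E[e^{-\lambda(X_{N-1}-X_0)}].
\]
Inducting down to $X_0$ yields $\E[e^{-\lambda(X_N-X_0)}]\le e^{N\lambda^2B^2/2}$.

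\textbf{Step 3 (Markov and optimize).} By Markov's inequality,
\[
\Pr[X_N-X_0\le -t]=\Pr\!\left[e^{-\lambda(X_N-X_0)}\ge e^{\lambda t}\right]\le \exp\!\left(-\lambda t+\frac{N\lambda^2B^2}{2}\right).
\]
Choosing $\lambda = t/(NB^2)$ gives exactly $\exp\!\left(-\frac{t^2}{2NB^2}\right)$, which is the claimed bound.

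The only mildly delicate point is Step 1. Hoeffding's lemma is usually stated for centered variables, so we need the shifted form that allows a nonzero conditional mean. That form follows by applying the centered version to $D_i-\E[D_i\mid\mathcal{F}_{i-1}]$, which still lies in an interval of length $2B$. The sign of the drift then only helps, because we are bounding the lower tail of a submartingale. Everything else is routine.
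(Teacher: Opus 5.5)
Your proof is correct: it is the standard exponential-moment argument (a conditional Hoeffding bound on each increment, the tower property, Markov's inequality, and $\lambda = t/(NB^2)$). The paper gives no proof of this lemma. It is simply quoted as a standard tool in the preliminaries, so there is no competing route to compare against.

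Three small remarks. First, your insistence on the natural filtration is not just the conventional reading; it is necessary. Under the footnote's literal condition $\E[X_{i+1}\mid X_i]\ge X_i$ the bound is false. Take $N=2$, $B=1$, $t=2$. Let $X_0$ be uniform on $\{0,-2\}$, and let $X_1=X_0\pm1$ with probability $\frac12$ each. From $X_1=-1$, step down if $X_0=0$ and up if $X_0=-2$; from $X_1\in\{1,-3\}$, take a fair $\pm1$ step. Then $\E[X_{i+1}\mid X_i]=X_i$ for $i=0,1$, yet $\Pr[X_2-X_0\le -2]=\frac14+\frac18=\frac38>e^{-1}$.

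Second, your choice $\lambda=t/(NB^2)>0$ tacitly uses $t>0$ (and $B>0$). This is an implicit hypothesis of the lemma itself, which fails for $t<0$ (e.g. for a constant process), not a defect of your argument.

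Third, Step 1 is fine as written, since the conditional Hoeffding lemma allows $\mathcal{F}_{i-1}$-measurable endpoints. You can also bypass the shift entirely. Convexity of $y\mapsto e^{-\lambda y}$ on the fixed interval $[-B,B]$ gives $\E[e^{-\lambda D_i}\mid\mathcal{F}_{i-1}]\le\cosh(\lambda B)-\frac{\E[D_i\mid\mathcal{F}_{i-1}]}{B}\sinh(\lambda B)\le\cosh(\lambda B)\le e^{\lambda^2B^2/2}$ directly.
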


A convenient variant of \Cref{lemma: azuma} that we use is recorded below.

\begin{lemma}\label{lemma: azuma variant}
Suppose $X_0, X_1,\dots, X_N$ is a submartingale such that $|X_i - X_{i-1}|\le B$ almost surely, and $\E[X_{i+1}\mid X_i] \ge X_i + c$. Suppose $(X'_0, X'_1,\dots, X'_N)$ is a list of random variables such that $d_{\mathrm{TV}}( (X_i)_{0\le i\le N}; (X'_i)_{0\le i\le N}) \le \beta$. Then
\begin{align*}
    \Pr[X'_N - X'_0 \ge cN - t] \ge 1 - \exp\left(-\frac{t^2}{2NB^2}\right) - \beta.
\end{align*}
\end{lemma}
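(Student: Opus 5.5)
The statement is the standard Azuma bound with a drift term, followed by a total-variation transfer. I would first prove the claim for the original sequence $(X_i)$ and then move it to $(X'_i)$.

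\textbf{Step 1 (remove the drift).} Define $Z_i = X_i - c\, i$ for $0\le i\le N$. The hypothesis $\E[X_{i+1}\mid X_i]\ge X_i + c$ gives $\E[Z_{i+1}\mid X_i]\ge Z_i$, so $(Z_i)$ is again a submartingale; I read the conditioning on the natural history, as in the footnote to \Cref{lemma: azuma}. The event of interest rewrites as
\[
\{X_N - X_0 \ge cN - t\} = \{Z_N - Z_0 \ge -t\}.
\]
It therefore suffices to show $\Pr[Z_N - Z_0 \le -t]\le \exp(-t^2/(2NB^2))$. Strictly, $\Pr[Z_N - Z_0 < -t]$ is what is needed, and it is bounded by the same quantity.

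\textbf{Step 2 (the increment bound, the only delicate point).} Applying \Cref{lemma: azuma} naively to $(Z_i)$ uses the increment bound $|Z_i - Z_{i-1}|\le B+|c|$. This would give the weaker exponent $t^2/(2N(B+|c|)^2)$.

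To obtain the stated constant I would rerun the Azuma--Hoeffding argument through the moment generating function, using the \emph{range} of each increment rather than its absolute value. Conditioned on the past, $D_i := Z_i - Z_{i-1}$ lies in the interval $[-B-c,\,B-c]$, which has length $2B$, and has conditional mean $\ge 0$. Hoeffding's lemma then gives, for $\lambda>0$,
\[
\E[e^{-\lambda D_i}\mid \text{past}]\le e^{\lambda^2 (2B)^2/8} = e^{\lambda^2 B^2/2}.
\]
Iterating over $i=1,\dots,N$ gives $\E[e^{-\lambda(Z_N-Z_0)}]\le e^{N\lambda^2B^2/2}$. Markov's inequality with the optimal choice $\lambda = t/(NB^2)$ then yields
\[
\Pr[Z_N - Z_0\le -t]\le \exp\!\left(-\frac{t^2}{2NB^2}\right).
\]

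\textbf{Step 3 (transfer to $(X'_i)$).} The event $E=\{(y_0,\dots,y_N): y_N - y_0\ge cN-t\}$ is a fixed measurable set of sequences. By the definition of total variation distance,
\[
\Pr[(X'_i)\in E]\ge \Pr[(X_i)\in E]-\beta .
\]
Combining this with Step 2 gives the claim.

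I expect Step 2 to be the only place requiring care, since matching the exact constant $2NB^2$ depends on using the range $2B$ rather than the magnitude $B+|c|$. If a cruder constant were acceptable for the downstream application in \Cref{appendix:proof-correlation}, one could simply apply \Cref{lemma: azuma} to $(Z_i)$ with bound $2B$, using $|c|\le B$, which holds because $|\E[X_{i+1}-X_i\mid X_i]|\le B$.
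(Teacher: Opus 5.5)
Your proposal is correct and follows the same route as the paper's proof: remove the drift via $Z_i = X_i - ci$, apply an Azuma-type bound, and pass to $(X'_i)$ at a cost of $\beta$ in total variation. Your Step 2 is more careful than the paper's one-line appeal to \Cref{lemma: azuma}: applied literally to $Z_i$, that lemma only has increments bounded by $B+|c|$ and so loses a constant in the exponent, whereas running Hoeffding's lemma on the conditional range of length $2B$, as you do, is what actually yields the stated $\exp\bigl(-t^2/(2NB^2)\bigr)$.
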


\begin{proof}
    Up to the TV distance of $\beta$ we can switch from $X'_i$'s to $X_i$'s. In order to analyze $X_{0},\dots, X_N$ we define $Y_i = X_i - i\cdot c$ and make use of \Cref{lemma: azuma}.
\end{proof}

\subsection{The rounding analysis}\label{appendix:proof-correlation}

Continuing our discussion in \Cref{sec:proof-fingerprint}, we now lower-bound the size of the sampled subsets. Recall that each player $i\in [P]$ will construct $w$ queries to the streaming algorithm. 

Consider the $j$-th query of player $i$. We define the
history of the algorithm up to before the $j$-th query as the interaction between the algorithm and the first $(i-1)$ players, as well as the first $(j-1)$ queries by the current player $i$.

For each $1\le j\le w$, we say the algorithm is \emph{useful} for the $j$-th query of player $i$, if conditioned on the history up to before the query $j$, for every $p(i,j)\in [0,1]$ that could have been sampled, it holds that
\begin{align*}
    \left| \E_{b(u,i,j)\sim \mathrm{Ber}(p(i,j))}[ a(i,j) \mid \text{history} ] - p(i,j) \right| \le \frac{2}{5}.
\end{align*}
Colloquially, the algorithm is useful for the $j$-th query if, before the algorithm is presented the $j$-th query, its internal state such that it does not fails to answer queries accurately. Two remarks are in order about the definition: first, since the players define the queries and simulate the algorithm in place, they can check for usefulness given the current memory configuration of the algorithm. Second, if the algorithm always answers the queries to within the $(\eta =1 + \frac{1}{10}, \tau = h)$-approximation guarantee, the algorithm would have been useful for every query: to see this, observe that in our construction we can convert the $(\eta, \tau)$-approximation guarantee to a $\frac{1}{10}(3h + k) + h$ additive error, which is below $\frac{2}{5}$ fraction of users from $S_i\cup C$.

We take ``being useful for the $j$-th query'' as an event depending on the history of the algorithm. By the accuracy guarantee of the algorithm, we know that the algorithm remains useful for every query with probability $1 - \frac{1}{T^2}$. Imagine a \emph{hypothetical world}, where the players will convert the algorithm to an absolutely correct mean estimator whenever the algorithm is found to be not useful. In the hypothetical world, the algorithm will indeed be useful throughout. Moreover, the total variation distance between the real world and the hypothetical world is at most $\frac{1}{T^2}$.

Now, re-write \eqref{equ: sum all scores in one stage} as
\begin{align}
    \sum_{j\in [w]} \left( \sum_{u\in S_i\cup C} a(i,j)\cdot (b(u,i,j)- p(i,j)) \right). \label{equ: sum all scores in one stage reordered}
\end{align}
This is a sum of $w$ terms where each query contributes one term. In the \emph{hypothetical world}, by \Cref{lemma: fingerprinting}, condition on the first $(j-1)$ terms, the next term is bounded by $[-\sqrt{5h\log(1/10\beta)}, \sqrt{5h\log(1/10\beta)}]$ 
and has mean $\frac{1}{10}$. We monitor the partial sum of \eqref{equ: sum all scores in one stage reordered} as $j$ increases. This is clearly a submartingale, and we can make use of Azuma's inequality (we use the variant from \Cref{lemma: azuma variant}) to deduce for the \emph{real world} that
\begin{align*}
    &~~~~ \Pr\left[ \sum_{j\in [w]} \left( \sum_{u\in S_i\cup C} a(i,j)\cdot (b(u,i,j)- p(i,j)) \right) > \frac{w}{20} \right] \\
    & \ge 1 - \exp\left( -\Omega\left( \frac{w^2}{w\cdot h \cdot \log(1/\beta) } \right) \right) - \frac{1}{T^2} \\
    &\ge 1-\exp(-\Omega(w/h)) - \frac{1}{T^2} \\
    &\ge 1-O(\beta).
\end{align*}
Therefore, we see that on average each inner sum of \Cref{equ: sum all scores in one stage} is $\frac{w}{100h}$. By the choice of $R$, we see that the result of the sampling procedure produces a set $Y_i\subseteq S_i\cup C$ of expected size $|S_i\cup C| \left( \frac{1}{2} + \Theta\left( \frac{\sqrt{w}}{h\sqrt{\log(1/\beta)}} \right)  \right)$. Using concentration inequality, we obtain that the size of $Y_i$ is lower bounded by  $\left( \frac{1}{2} +  \frac{1}{2} \Theta\left( \frac{\sqrt{w}}{h\sqrt{\log(1/\beta)}} \right)  \right)$ with probability $1 - \exp\left( -\Omega(\frac{w\cdot h}{h^2\log(1/\beta)}) \right)\ge 1 - \beta$.


On the other hand, by the privacy property of the algorithm, we have, for every $u\in C$, that the bound $\sum_i \sum_j a(i,j)\cdot (b(u,i,j) - p(i, j))\le \sqrt{Pw \log(1/\beta)}$ holds with probability $1-2\delta + \beta$. By our choice of $R = \sqrt{w\log(2/\beta)}$, it follows that with probability $1-O(\beta)$ the user $u$ appears in at most $\frac{1}{2}P + O(\sqrt{P})$ many $Y_i$'s, by our choice of $R$.

By our choice of $\beta = \frac{1}{T^2}$, we can afford to union-bound over all sets $Y_i$, and deduce that the sets $Y_i$ satisfy the requirement of \Cref{lem:rounding-one-phase} with probability at least $1-\frac{1}{T} \ge 1-\frac{1}{10Ph}$.

\subsection{Proof of \Cref{lemma: communication for avoid heavy}} \label{appendix:proof-communication}

To prove \Cref{lemma: communication for avoid heavy}, we make use of the following claim (proved later).

\begin{lemma}\label{lemma: communication helper}
Let $(A,m)\in \binom{[N]}{k}\times \{0,1\}^{M}$ be a joint random variable such that the marginal distribution of $A$ is uniform over $k$-subsets of $[N]$. Suppose $f:\{0,1\}^M\to \Delta\left( \binom{[N]}{k'} \right)$ is a randomized\footnote{We use $\Delta(U)$ to denote the probability simplex over a set $U$.} mapping that takes input $m\in \{0,1\}^M$ and outputs a $k'$-subset of $[N]$. Denote $Z = \E_{A, m,f(m)}[|f(m)\cap A|]$.  Then, it holds that
\begin{align*}
M \ge \log\binom{N}{k} - \log k - \log \binom{k'}{Z} - \log \binom{N - k'}{k-Z}
\end{align*}
\end{lemma}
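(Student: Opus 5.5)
We prove \Cref{lemma: communication helper} by an encoding (counting) argument. From the message $m$ together with a small amount of extra information we will reconstruct $A$ exactly. Since $A$ is uniform over $\binom{[N]}{k}$, its entropy is $\log\binom{N}{k}$. On the other hand, the total description length we use will be at most
\[
M + \log k + \log\binom{k'}{Z} + \log\binom{N-k'}{k-Z}
\]
in expectation, which gives the bound after rearranging. The one subtlety is that $f$ is randomized and $Z$ is only an expectation, so we have to handle both points.

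\textbf{Step 1: fix the randomness of $f$.} Let $B=f(m)$, and regard the randomness of $f$ as shared randomness $\rho$ that is independent of $A$ and of the pair $(A,m)$. Conditioning on $\rho$ does not change the marginal of $A$, because $\rho$ is independent of $A$. We will therefore bound $H(A\mid\rho)=H(A)=\log\binom{N}{k}$ from above.

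\textbf{Step 2: the chain-rule bound.} Let $J=|B\cap A|\in\{0,\dots,k\}$. Then
\[
H(A\mid\rho)\le H(m,J,A\mid\rho)\le H(m)+H(J)+H(A\mid m,J,\rho).
\]
The three terms are bounded as follows.
\begin{itemize}
\item $H(m)\le M$, since $m\in\{0,1\}^M$.
\item $H(J)\le\log(k+1)$, since $J$ takes $k+1$ values. This is $\log k+O(1)$; the $O(1)$ slack is harmless given the $O(\log(h+k))$ loss in \Cref{lemma: communication for avoid heavy}, or we can absorb it by a slightly tighter accounting.
\item Given $m$ and $\rho$, the set $B$ is determined. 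Given also $J=j$, the set $A$ is a $j$-subset of $B$ together with a $(k-j)$-subset of $[N]\setminus B$. Hence
\[
H(A\mid m,J=j,\rho)\le g(j):=\log\binom{k'}{j}+\log\binom{N-k'}{k-j}.
\]
\end{itemize}

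\textbf{Step 3: from $\E[g(J)]$ to $g(Z)$ (the main obstacle).} Steps 1 and 2 give $\log\binom{N}{k}\le M+\log(k+1)+\E[g(J)]$, and we need $\E[g(J)]\le g(Z)$, where $Z=\E[J]$. We would first try concavity. The map $j\mapsto\log\binom{n}{j}$, extended via $\log\Gamma$, is concave: $\log\Gamma$ is convex, and $\log\binom{n}{j}=\log\Gamma(n+1)-\log\Gamma(j+1)-\log\Gamma(n-j+1)$. Similarly $j\mapsto\log\binom{N-k'}{k-j}$ is concave in $j$. So $g$ is concave, and Jensen's inequality gives $\E[g(J)]\le g(\E J)=g(Z)$.

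This interpretation of binomial coefficients at non-integer $Z$ is the place requiring care. The paper's subsequent use only needs $Z$ roughly $(\tfrac12\pm\epsilon)k$ with Stirling estimates, so using the Gamma-function extension is consistent with that use. Alternatively, we could round $Z$ and absorb the error into the $\log k$ term.

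Rearranging gives the claimed inequality, with $\log k$ standing in for $\log(k+1)$ up to the same absorbed constant.
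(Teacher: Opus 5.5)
Your proof is correct and is essentially the paper's encoding argument: recover $A$ from $B=f(m)$ together with $|A\cap B|$, $A\cap B\subseteq B$ and $A\setminus B\subseteq[N]\setminus B$, then pass from $\E[g(J)]$ to $g(Z)$ using log-concavity of binomial coefficients. Your two deviations are improvements rather than detours. First, conditioning on the randomness $\rho$ of $f$ avoids the paper's step $H(B)\le H(m)$, which data processing does not justify when $f$ is randomized (only $I(A;B)\le I(A;m)\le M$ holds). Second, $\log(k+1)$ is the correct bound on the entropy of $|A\cap B|$, since the paper's $H(I\mid B)\le\log k$ already fails for $k=1$; this constant-size difference is harmless in \Cref{lemma: communication for avoid heavy}.
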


Assuming the claim, we prove \Cref{lemma: communication for avoid heavy}.

\begin{proof}[Proof of \Cref{lemma: communication for avoid heavy}]
    We can assume that every player always sends in a subset of size exactly $\left( \frac{1}{2} + \epsilon_1 \right) |S_i\cup C|$. By our assumption, every $u\in C$ appears in no more than $(\frac{1}{2} + \epsilon_2)P$ submitted sets with probability $1-\frac{1}{10P}$. It follows that in expectation every $u\in C$ appears in less than $(\frac{1}{2} + \epsilon_2)P + 1$ submissions. As there are $P$ players in total, it follows that there is at least one player $i$ such that the subset $Y_i$ from that player satisfies that $\E[|Y_i\cap C|] \le (\frac{1}{2} + \epsilon_2) k$. Fix one such player $i^*$. 

    We know that Player $i^*$ produces their subset $Y_{i^*}$ from the message $m_{i^*-1}$ and the ground set $S_{i^*}\cup C$. Moreover, the set $Y_{i^*}$ has the property that $|Y_{i^*}|$ is fixed to $(\frac{1}{2} + \epsilon_1)(3h+k)$ and $\E[|Y_{i^*}\cap C|] \le (\frac{1}{2}+\epsilon_2)k$. We then apply \Cref{lemma: communication helper} to $i^*$ to deduce that
    \begin{align*}
        M \ge &\log\binom{3h+k}{k} - \log\binom{(3h+k)(\frac{1}{2} + \epsilon_1)}{\left( \frac{1}{2} -\epsilon_1 \right)k} \\ &- \log\binom{(3h+k)(\frac{1}{2}  - \epsilon_1)}{\left( \frac{1}{2} + \epsilon_1 \right)k} - O \left(\log(h+k)\right),
    \end{align*}
    as claimed.
\end{proof}

We give the proof of \Cref{lemma: communication helper} below.

\begin{proof}[Proof of \Cref{lemma: communication helper}]
We use an encoding argument. First, denote $B = f(m)$ and let us consider the Markov chain $A\to m\to B$. We now encode the set $A$ given $B$, using some additional side information $C$ that we introduce below. First, let $I:= |A\cap B|$. Then, describe the set $A\cap B$ and $A\setminus B$ \emph{with reference} to $I$ and $B$ (that means $A\setminus B$ is represented as a subset of $[N]\setminus B$, and $A\cap B$ a subset of $B$). Finally, define $C$ as the tuple $(I, A \cap B, A\setminus B)$.

We shall now consider the joint random variable $(A, m, B, C)$. We see that $H(B, C) \ge H(A)$ because $A$ can be recovered exactly given only $B$ and $C$. We also have $H(B)\le H(m) \le M$ by the data processing inequality. Lastly we know that $H(B,C) =  H(B) + H(C\mid B)$. Taken together, we obtain
\begin{align*}
    M + H(C \mid B) \ge H(B) + H(C\mid B) \ge  H(A) = \log \binom{N}{k}.
\end{align*}
So it remains to give an upper bound for $H(C \mid B)$. Denote $I = |A\cap B|$. We have
\begin{align*}
    H(C\mid B) 
    &\le H(I \mid B) + H(A \mid B, I) \\
    &\le \log(k) + \E_{I} \left[ \log \binom{|B|}{I} + \log\binom{N-|B|}{k - I} \right] \\
    &\le \log(k) + \log \binom{|B|}{Z}+\log\binom{N-|B|}{k-Z},
\end{align*}
where the last inequality is due to the log-concavity of binomial coefficients. The conclusion now follows by plugging the upper bound of $H(C)$ into the inequality above.
\end{proof}

\subsection{Proof of \Cref{thm: main extension}} \label{appendix:proof-extension} 

Here we describe how to extend the proofs to other problems. In order to prove \Cref{thm: main extension}, we will replace the fingerprinting lemma (\Cref{lemma: fingerprinting}) with a recent ``strong'' variant by \cite{PeterTU24}. Formally, their lemma reads:
\begin{lemma}[Strong fingerprinting lemma of \cite{PeterTU24}]\label{lemma: strong fingerprinting}
    Let $f:\{0,1\}^{n}\to [0,1]$ be a function, such that $f(0^n) \le 0.1$ and $f({1^n})\ge 0.9$. Then, it holds that
    \begin{align}
        &\E_{t\sim [-\log(5n), \log(5n)] \atop p:= \frac{e^t}{e^t+1}}\E_{x\sim \mathrm{Ber}(p)^n}\left[ f(x) \sum_{1\le i\le n} \left(  x_i - p \right) \right] \nonumber \\ 
        &\qquad \qquad \ge \Omega\left( \frac{1}{\log(n)} \right).\label{equ: fingerprinting lemma strong}
    \end{align}
\end{lemma}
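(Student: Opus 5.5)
The plan is to reduce the inequality to the fundamental theorem of calculus along the logistic parametrization $p(t) = \frac{e^t}{e^t+1}$. Define $g(p) := \E_{x\sim \mathrm{Ber}(p)^n}[f(x)] = \sum_{x} f(x)\prod_i p^{x_i}(1-p)^{1-x_i}$, which is a polynomial in $p$. The first step is the standard fingerprinting identity. Differentiating the product term by term gives
\begin{align*}
\frac{d}{dp}\prod_i p^{x_i}(1-p)^{1-x_i} = \prod_i p^{x_i}(1-p)^{1-x_i}\cdot \frac{\sum_i (x_i - p)}{p(1-p)}.
\end{align*}
Hence
\begin{align*}
\E_{x\sim\mathrm{Ber}(p)^n}\left[f(x)\sum_i (x_i-p)\right] = p(1-p)\, g'(p).
\end{align*}

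The second step is the change of variables. Since $\frac{dp}{dt} = p(1-p)$ for the logistic map, the right-hand side above equals $\frac{d}{dt}\, g(p(t))$. Write $L = \log(5n)$. Averaging over $t$ uniform in $[-L, L]$, the left-hand side of \eqref{equ: fingerprinting lemma strong} is exactly
\begin{align*}
\frac{g(p(L)) - g(p(-L))}{2L}.
\end{align*}
This is where the logistic parametrization is the key choice: it turns the weight $p(1-p)$ into an exact derivative. No accuracy assumption is needed at intermediate values of $p$, only at the two endpoints.

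The third step bounds the endpoints using only $f(0^n)\le 0.1$, $f(1^n)\ge 0.9$ and $0\le f\le 1$.

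At $t = L$ we have $1-p(L) = \frac{1}{5n+1}$. By a union bound, $\Pr[x = 1^n] \ge 1 - \frac{n}{5n+1} \ge \frac{4}{5}$. Since $f\ge 0$, this gives $g(p(L)) \ge \frac{4}{5}\cdot 0.9 = 0.72$.

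Symmetrically, at $t=-L$ we have $\Pr[x = 0^n]\ge \frac45$. Using $f\le 1$ on the remaining mass gives $g(p(-L)) \le 0.1 + \frac15 = 0.3$.

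The difference is therefore at least $0.42$, and dividing by $2L = 2\log(5n)$ yields the claimed $\Omega(1/\log n)$.

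I do not expect a serious obstacle. The only care needed is checking the derivative identity, in particular that the $p(1-p)$ factors cancel exactly, and checking that the endpoint choice $L=\log(5n)$ makes the all-ones and all-zeros outcomes dominant with constant probability.
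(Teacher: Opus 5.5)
Your proof is correct and complete. The identity $\E_{x\sim\mathrm{Ber}(p)^n}[f(x)\sum_i(x_i-p)] = p(1-p)\,g'(p)$, combined with $\frac{dp}{dt}=p(1-p)$ for the logistic map, makes the $t$-average collapse exactly to $\frac{g(p(L))-g(p(-L))}{2L}$, and your endpoint estimates $g(p(L))\ge 0.72$ and $g(p(-L))\le 0.3$ are valid. The paper does not prove this lemma itself; it imports it from Peter, Tsfadia and Ullman, and your fundamental-theorem-of-calculus argument along the logistic parametrization is, as far as I recall, the standard proof of that cited result, so nothing is missing.
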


Let us quickly compare \Cref{lemma: strong fingerprinting} with \Cref{lemma: fingerprinting}. The most significant advantage of \Cref{lemma: strong fingerprinting} (why it is called ``strong'') is that we have a much more relaxed requirement for the function $f$: namely, when viewed as an empirical mean estimator, we only need $f$ to be accurate on two extreme inputs $0^n$ and $1^n$. The ``price'' we pay is that we need to change how the prior $p$ is defined, and the conclusion we obtain is weaker than \Cref{lemma: fingerprinting} by a logarithmic factor. However, since we are focusing the polynomial factors in proving lower bounds, we can afford to lose a ``log'' here.

Equipped with \Cref{lemma: strong fingerprinting}, it remains to design a ``mean estimator'' from a \textsf{Quantile} or \textsf{MaxSelect} streaming algorithm. Recall this was very easy for the case of \textsf{CountDistinct}: we only insert all users that ``contribute'' to a query into the stream. By asking for the current distinct count, we can infer how many users have contributed to the query. For the new problems, we will use the following gadget instead:
\begin{itemize}
    \item \textsf{Quantile}: Suppose there is a set $Q$ of active users out of all $[N]$ users, where each active user holds a private bit, and we are interested in distinguishing whether they all hold $0$ or they all hold $1$ (if there is no consensus, our estimator is allowed to output anything). We create a universe $U$ with two items, namely $U = \{0,1\}$. Then we assign each active user an item corresponding to their private bit. For the users from $[N]\setminus Q$, we simply set their item to $1$. 

    Now, consider querying for the rank-$\frac{Q}{2}$ item among all users. If the algorithm has additive error less than $\frac{Q}{2}$, we see that the algorithm has to output the same bit if all active users hold that bit. This is exactly what is required in \Cref{lemma: strong fingerprinting}.  

    \item \textsf{MaxSelect}: Again, suppose there are $Q$ active users each holding a private bit. Create two features (i.e., $d=2$) for the \textsf{MaxSelect} problem, and we set the first feature as one for half of all users (the selected half is randomly chosen and is considered a piece of public information). Then for every active user with a ``one'' input, we set the second feature of that user as one. We set the feature vectors of inactive users to be all-zero.

    Now we query for the more popular feature among the two. Assuming the algorithm has additive error less than $\frac{Q}{2}$, we see that the algorithm has to output Feature $1$ if none of the users is in favor of the second feature. On the other hand, the algorithm needs to output Feature $2$ if all of them are in favor of the second feature. Thus, this reduction also gives an ``estimator'' matching the requirement of \Cref{lemma: strong fingerprinting}.
\end{itemize}

\paragraph*{Construction overview.} Given these gadgets, one can easily modify the construction of \Cref{sec:overview-construction} to perform a reduction from ``answering one-way marginal queries'' to ``answering \textsf{Quantile} or \textsf{MaxSelect} queries''. In slightly more detail, given the parameters $w, k, h, T$, we generate the sets $C, S_1,\dots, S_P$ in the same way as in \Cref{sec:overview-construction}. Then we design the input stream: in Step (1) we will sample $p$ according to the rule given by \Cref{lemma: strong fingerprinting} (namely, set $n = 3h + k$, choose $t \sim [-\log(5n), \log(5n)]$ and set $p = \frac{e^t}{e^{t}+1}$). In Step (2) we create updates according to the rules specified above. In Step (3) we make a query, whose answer can be converted into a bit. Lastly, in Step (4) we undo the updates we did in Step (2).

\paragraph*{Analysis overview.} Given the construction, we apply a similar analysis as those appeared in \Cref{sec:proof-fingerprint} and \Cref{sec:proof-communication}. Namely, we can define the correlation terms, prove that they are bounded with high probability by the privacy property of the algorithm. At the same time, the correlation terms are lower bounded by the new Fingerprinting lemma of \cref{lemma: strong fingerprinting}. We then use the correlation terms to sample subsets, which induces a communication protocol for \textsf{AvoidHeavyHitters} as in \Cref{sec:proof-communication}. Finally, the desired space lower bound readily follows from the established lower bounds for \textsf{AvoidHeavyHitters}. The quantitative aspect of the new analysis remains essentially the same as before, except for one minor detail: namely the correlation lower bound we obtained from \Cref{lemma: strong fingerprinting} is weaker than the one from \Cref{lemma: fingerprinting} by a $\log(n)$ factor. As a result, the parameter $\epsilon_1$ for \textsf{AvoidHeavyHitters} gets smaller by a factor of $\log(n)$, which ultimately makes our lower bound weaker by a factor of $\log^2(n)$. Still, we emphasize that this does not affect our main result of \Cref{thm: main extension}, as the logarithmic factors can be ``absorbed'' into the $\widetilde O$ notation.

\end{document}